\newtheorem{thm}{Theorem}
\newtheorem{lem}[thm]{Lemma}
\newtheorem{defin}[thm]{Definition}}
\newtheorem{exmp}[thm]{Example}}
\newtheorem{protocol}[thm]{Protocol}}
\numberwithin{thm}{section}
\numberwithin{equation}{section}
\newcommand{\Skeptic}{\textsf{Skeptic}}
\newcommand{\Reality}{\textsf{Reality}}
\journalname{Journal of Statistical Physics}
\begin{document}

\title{Gibbs Distribution From Sequentially Predictive Form of the Second Law}

\author{Ken Hiura}

\institute{K. Hiura \at
              Department of Physics, Kyoto University, Kyoto 606-8502, Japan \\
              \email{hiura.ken.88n@st.kyoto-u.ac.jp} }

\date{Received: date / Accepted: date}

\maketitle

\begin{abstract}
We propose a prequential or sequentially predictive formulation of the work extraction where an external agent repeats the extraction of work from a heat engine by cyclic operations based on his predictive strategy. We show that if we impose the second law of thermodynamics in this situation, the empirical distribution of the initial microscopic states of the engine must converge to the Gibbs distribution of the initial Hamiltonian under some strategy, even though no probability distribution are assumed. We also propose a protocol where the agent can change only a small number of control parameters linearly coupled to the conjugate variables. We find that in the restricted situation the prequential form of the second law of thermodynamics implies the strong law of large numbers of the conjugate variables with respect to the control parameters. Finally, we provide a game-theoretic interpretation of our formulation and find that the prequential work extraction can be interpreted as a testing procedure for random number generator of the Gibbs distribution.
\keywords{Statistical thermodynamics \and Game-theoretic probability theory \and Martingale \and Prequantial analysis}

\end{abstract}

\section{Introduction}


Equilibrium statistical mechanics gives a tool for calculating macroscopic thermodynamic quantities from the Hamiltonian characterizing microscopic properties of the system \cite{callen}. A fundamental assumption of equilibrium statistical mechanics is that microscopic states are randomly sampled according to the Gibbs distribution for the Hamiltonian. While we can obtain statistical properties of observables, such as means and variances, from this probabilistic assumption, the assumption is also consistent with the second law of thermodynamics. In fact, we cannot extract a strictly positive amount of work through any cyclic process \textit{on average} if the initial probability distribution is Gibbssian \cite{pw,lenard,gp,daniels,jarzynski,crooks}. This result can be regarded as a derivation of the second law of thermodynamics from statistical mechanics. It is natural to ask whether the second law conversely characterizes the Gibbs distribution or not. This question has been traditionally studied in terms of passivity \cite{pw,lenard,gp,daniels}. These studies showed that the initial probability distribution is Gibbssian if and only if any number of copies of the identical state satisfy the second law of thermodynamics. In this approach, we crucially assume that the system is described by a probability distribution on the phase space. In contrast, our question we consider in this paper is how the probabilistic description based on the Gibbs distribution emerges from the second law of thermodynamics, particularly the absence of the perpetual motion machine of the second kind. 

A similar question was posed in the context of probability theory. In measure-theoretic probability theory, we assign a real value in $[0,1]$ to each event under certain compatibility conditions such as the additivity. Although measure-theoretic probability theory is a useful tool to analyze the random behavior of phenomena in nature, it does not provide a characterization of randomness itself. As an alternative approach, Shafer and Vovk proposed game-theoretic probability theory \cite{sv1,sv2}. In a gambling, they say that the gambling is fair if the gambler never become infinitely rich in the limit as the gamble continues and that the sequence of events obtained from such gambling is random. Based on these ideas, they proved, say, the law of large numbers in terms of the gambling, without using measure-theoretic concepts. If we think of the work extraction as a certain gambling between an agent and the nature, the second law of thermodynamics corresponds to the fairness condition in the gambling. If this reasoning is true, we can expect that randomness of the microscopic states in an equilibrium system is characterized through the second law of thermodynamics without using the probabilistic assumption in statistical mechanics. The purpose of the paper is to validate this idea and to answer in the affirmative.

\subsection{Elementary Example: Single-Particle Ideal-Gas Engine}
\label{subsec:idealgas}

We consider a single-particle ideal-gas engine as an elementary example to clarify our problem. A single particle is confined in a box of volume $V = L^3$ and in contact with a heat bath having a temperature $T$. An external agent attempts to extract work from the system. The agent inserts a barrier at the center of the box, $x = L/2$, and moves the barrier quasi-statically to $x = (1-\mu) L$, where $\mu \in (0,1)$ specifies the final position of the barrier. When the particle is on the left side (resp. right side), we set $\omega = 0$ (resp. $\omega = 1$). Since the volume of the region in which the particle is confined after the operation is $\mu^{\omega}(1-\mu)^{1-\omega} V$, the work extracted in this process is given by
\begin{align} \label{eq:spigw}
 W^{\mu}(\omega) = \int_{V/2}^{\mu^{\omega}(1-\mu)^{1-\omega}V} \frac{k_{\mathrm{B}}T}{V} dV = k_{\mathrm{B}}T \ln 2 \mu^{\omega} (1-\mu)^{1-\omega},
\end{align}
where $k_{\mathrm{B}}$ is the Boltzmann constant and we have used the equation of state $P(T,V,N) = N k_{\mathrm{B}}T/V$ for the ideal gas. Finally, the agent removes the barrier and the system returns to the initial state so that the overall process becomes cyclic. According to the second law of thermodynamics, the mechanical work extracted by any cyclic operations is always non-positive. Eq. (\ref{eq:spigw}), however, becomes positive for some $\mu$ and $\omega$, e.g., $\mu = 1/4$ and $\omega = 0$. In statistical mechanics and stochastic thermodynamics, this apparent inconsistency is considered to arise from the fluctuations in small systems. To resolve the inconsistency, we assume that $\omega$ is a random variable obeying the equilibrium distribution $P_{1/2} (\{ \omega = 0 \}) = P_{1/2} (\{ \omega = 1\}) = 1/2$. Then, although the second law is violated with positive probability for $\mu \neq 1/2$, $P_{1/2}(W^{\mu} > 0) = 1/2 > 0$, due to the fluctuation, the \textit{expectation value} of the extracted work $W^{\mu}$ is non-positive for any operations $\mu$, $\mathbb{E}_{P_{1/2}} [W^{\mu}] = k_{\mathrm{B}}T \ln 4 \mu (1-\mu) \leq 0$, and the second law of thermodynamics remains true \textit{on average}. Moreover, the second law conversely characterizes the equilibrium distribution. To prove this statement, we assume that the initial distribution of $\omega$ is given by $P_{\rho} ( \{ \omega = 0 \} ) = 1- \rho$ and $P_{\rho}(\{ \omega = 1 \} ) = \rho$, where $\rho \in (0,1)$ quantifies the inhomogeneity of the particle distribution. The expectation value of the extracted work is given by $\mathbb{E}_{P_{\rho}} [W^{\mu}] = k_{\mathrm{B}} T ( D(P_{\rho} \| P_{1/2}) - D(P_{\rho} \| P_{\mu}))$. Here $D(P_{\rho} \| P_{\mu}) \coloneqq \rho \ln (\rho / \mu) + (1-\rho) \ln [(1-\rho) / (1-\mu)]$ is the Kullback-Leibler divergence between the Bernoulli distributions $P_{\rho}$ and $P_{\mu}$. If the initial distribution is not uniform, i.e., $\rho \neq 1/2$, we have that $\mathbb{E}_{P_{\rho}} [W^{\rho}] = k_{\mathrm{B}} T D(P_{\rho} \| P_{1/2}) > 0$ by choosing $\mu = \rho$. Therefore, the condition that $\mathbb{E}_{P_{\mu}} [W^{\mu}] \leq 0$ for any $\mu \in (0,1)$ implies $\rho = 1/2$. In summary, the non-positivity of the averaged work is equivalent to the equilibrium condition for the initial distribution in this example.

We stress again that the extracted work $W^{\mu}(\omega)$ is positive for some initial state $\omega$ and operation $\mu$. Therefore, when we speak of the validity of the second law of thermodynamics in small systems, we have to consider a situation that we prepare microscopic states and extract work many times. Moreover, if the empirical frequency of the initial microscopic states is biased from the equilibrium distribution, the second law may be violated suggested by the argument in the previous paragraph. In this sense, the second law in small systems may requires the stochastic behavior of the initial microscopic states as well as the equilibrium condition for the probability distribution characterizing its stochasticity. In this paper, we ask how can we formulate mathematically the emergence of stochasticity characterized by the equilibrium distribution from the second law of thermodynamics. To answer the question, we investigate two issues in this subsection.

First, we provide a mathematical definition of ``stochastic behavior of the initial microscopic states''. In this paper, we consider the situation that the agent repeats the cyclic operations infinitely many times and ask whether an infinite sequence $\omega_1 \omega_2 \dots \in \{ 0, 1 \}^{\mathbb{N}_{+}}$ of initial microscopic states in the experiments is a random sequence or not with respect to the equilibrium distribution $P_{1/2}$. Here $\mathbb{N}_{+} = \{ 1, 2, \dots \}$ denotes the set of positive integers and $\omega_n \in \{ 0, 1\}$ denotes the initial position of the particle in the $n$-th cycle. An example of non-random sequence is $00000 \dots$, which corresponds to the situation that the particle is always in the left side of the box. Although the theory of algorithmic randomness \cite{nies,dh} provides a reasonable and rigorous definition of randomness for individual sequences, we pay attention to only the convergence of the empirical distribution and empirical mean in this paper. For the single-particle ideal-gas engine, we regard an infinite sequence $\omega_1 \omega_2 \dots$ representing the positions of the particle as a random sequence if the sequence satisfies the strong law of large numbers (SLLN),
\begin{align} \label{eq:slln}
 \lim_{n \to \infty} \frac{1}{n} \sum_{i=1}^{n} \omega_i = \frac{1}{2}.
\end{align}
We discuss the relation to the algorithmic randomness in Section \ref{sec:cr}.

Second, we formulate the second law of thermodynamics without probability measures. Since our purpose is to clarify the emergence of the probabilistic description, we cannot suppose the underlying probability distribution at the starting point of the discussion and use the standard definition of the second law $\mathbb{E} [W^{\mu}] \leq 0$. To remove the probability distribution and expectation value from the description, we consider again the situation that the agent repeats the cyclic operations infinitely many times (Fig. \ref{fig:protocol}). In contrast to averaging the extracted work for a single cyclic operation, in repeated cyclic operations, we allow the agent to choose a different operation in each cycle. A critical problem here is to specify the information that the agent can use when he decides the cyclic operation in each cycle. In this paper, we apply the \textit{prequential scheme} \cite{dawid} or predictive scheme, where the agent determines the volume fraction $\mu_n \in (0,1)$ in the $n$-th cycle depending on only the past history $\omega_1, \dots, \omega_{n-1}$ of the initial positions up to the $(n-1)$-th cycle \footnote{The protocol should not be confused with that in Szilard's engine \cite{szilard}. Imagine that positions of the particle are prepared independently and identically according to the uniform distribution $P_{1/2}$. The mutual information between the positions up to the $(n-1)$-th cycle and the position in the $n$-th cycle is zero due to the statistical independency. Therefore, even if the agent can use the information on the past history, the expectation value of the extracted work in the $n$-th cycle is always non-positive.}. In other words, the agent predicts the position of the particle in the $n$-th cycle from the past results $\omega_1 \dots \omega_{n-1}$ and performs a cyclic operation based on the prediction. An assignment $\omega_1 \dots \omega_{n-1} \mapsto \mu_n$ for each $n = 1, 2, \dots$ represents a prediction scheme called a \textit{strategy} for the agent and denoted by $\hat{\mu}$. For a given strategy $\hat{\mu}$ for the agent, the accumulation of the extracted work $W^{\hat{\mu}}$ is given by
\begin{align}
 W^{\hat{\mu}}(\omega_1 \dots \omega_n) \coloneqq \sum_{i=1}^{n} k_{\mathrm{B}}T \ln 2 \mu_t^{\omega_i} (1-\mu_t)^{1-\omega_i}.
\end{align}
We define the violation of the second law in terms of the asymptotic behavior of $W^{\hat{\mu}}$. We say that an infinite sequence $\omega_1 \omega_2 \dots \in \{ 0, 1\}^{\mathbb{N}_+}$ of initial positions of the particle violates the second law of thermodynamics under the strategy $\hat{\mu}$ if the total amount of the extracted work from the heat engine diverges to infinity, i.e.,
\begin{align} \label{eq:psl}
 \lim_{n \to \infty} W^{\hat{\mu}} (\omega_1 \dots \omega_n) = \infty.
\end{align}
This means that the agent can extract work from such a sequence as much as he wants by repeating the cyclic operations sufficiently many times. For example, if the agent chooses $\mu_t = 1/4$ for any $t \in \mathbb{N}_+$, the infinite sequence $0000 \dots$ violates the second law because $W^{\hat{\mu}}(\omega_1 \dots \omega_n) = n k_{\mathrm{B}}T \ln (3/2)$ diverges to infinity as $n \to \infty$. We adopt this definition as the second law in our study because it refers to no probability measure. We note that this definition is consistent with equilibrium statistical mechanics. In fact, if we assume that $\omega_1, \omega_2, \dots$ are independent and identically distributed random variables obeying the product distribution $P_{1/2}^{\otimes \mathbb{N}_{+}}(\{ \omega_n = 0 \}) = P_{1/2}^{\otimes \mathbb{N}_{+}} (\{ \omega_n = 1 \})=1/2$, the probability that the second law is violated is zero,
\begin{align} \label{eq:pv2l}
 P_{1/2}^{\otimes \mathbb{N}_{+}} \left\{ \lim_{n \to \infty} W^{\hat{\mu}}_n =  \infty \right\} = 0,
\end{align}
where we have defined a random variable $W_n^{\hat{\mu}} (\xi) \coloneqq W^{\hat{\mu}}(\omega_1 \dots \omega_n)$ for $\xi = \omega_1 \omega_2 \dots \in \{ 0, 1 \}^{\mathbb{N}_{+}}$. We present a proof of (\ref{eq:pv2l}) in Appendix \ref{ap:pv2l}.

\begin{figure}
\centering
\includegraphics[width=11cm]{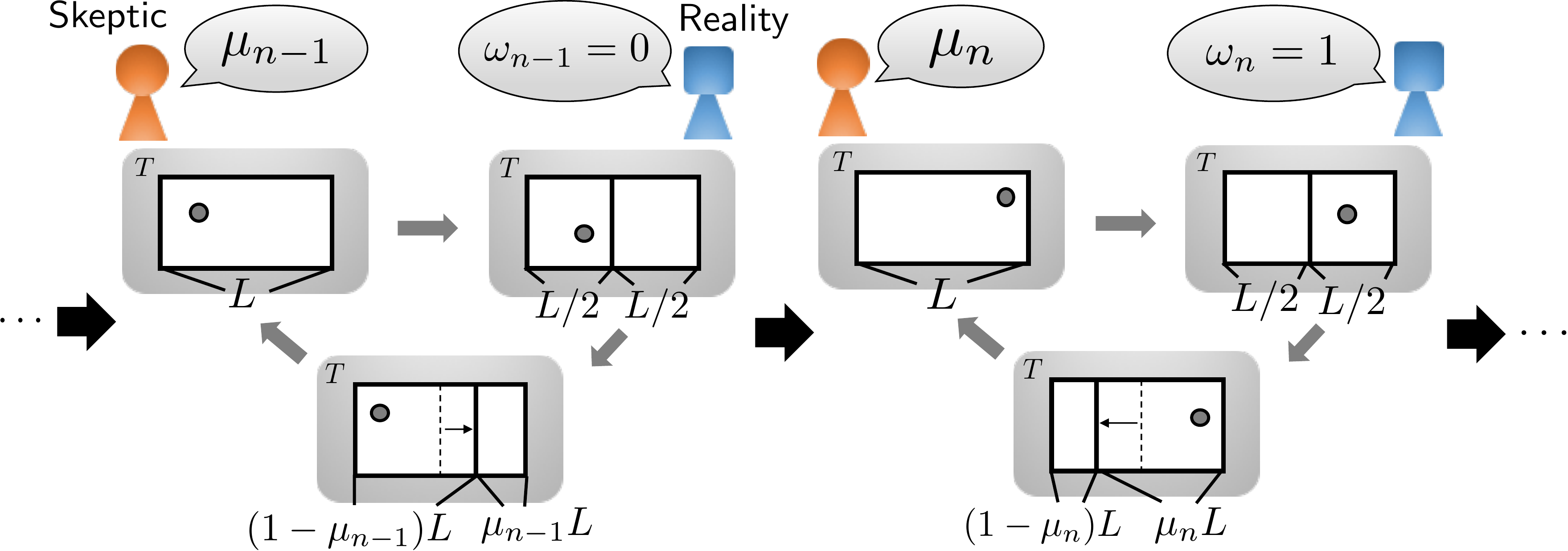}
\caption{Schematic of the protocol. (1) An agent named $\Skeptic$ announces the position of the barrier $\mu_n \in (0,1)$ depending on the past history $\omega_1 \dots \omega_{n-1}$ of positions of the particle. (2) Another agent named $\Reality$ announces the position of the particle $\omega_n$ in the $n$-th cycle. (3) $\Skeptic$ moves the barrier from $x = L/2$ to $x = (1-\mu_n)L$ and remove it. (4) Go back to (1). }
\label{fig:protocol}
\end{figure}

Based on the above two arguments, our question is formulated as follows. Instead of assuming that $\omega_n$ is a random variable and introducing a probability measure from the beginning, we ask which sequences can be realized under the second law of thermodynamics and what statistical properties such sequences have commonly. As we noted before, in this paper, we consider the strong law of large numbers (\ref{eq:slln}) as a relevant statistical property. The SLLN is formulated in probability theory as
\begin{align}
 P_{1/2}^{\otimes \mathbb{N}_{+}} \left\{ \lim_{n \to \infty} \frac{1}{n} \sum_{i=1}^{n} \omega_i = \frac{1}{2} \right\} = 1.
\end{align}
This statement means that the relative frequencies of positions of particle becomes one half almost surely. However, we give a different formulation of the SLLN in our setting because no probability measure enters the setting. According to the excellent book ``Probability and Finance: It's Only a Game!'' by Shafer and Vovk \cite{sv1}, there exists a strategy $\hat{\mu}_{\mathrm{SLLN}}$ for the agent such that if an infinite sequence $\omega_1 \omega_2 \dots \in \{ 0, 1\}^{\mathbb{N}_+}$ of positions of the particle does \textit{not} satisfy the SLLN (\ref{eq:slln}), the second law of thermodynamics is then violated, i.e., (\ref{eq:psl}) holds for the sequence. Equivalently, if the sequence retains the second law under the strategy $\hat{\mu}_{\mathrm{SLLN}}$, the positions of the particle necessarily obey the SLLN (\ref{eq:slln}). This implies that even though no probability measure is assumed, the empirical distribution for positions of the particle must be consistent with the equilibrium distribution due to the constraint by the second law. From the viewpoint of statistical mechanics, the second law of thermodynamics is a consequence of the equilibrium distribution and equilibrium statistical mechanics gives a microscopic foundation for the second law. However, according to the theorem by Shafer and Vovk, the second law of thermodynamics (\ref{eq:psl}) under some strategy requires that the sequence must be random in the sense that it satisfies the SLLN (\ref{eq:slln}), and leads to the equilibrium distribution in this sense. Our purpose of the paper is to investigate this novel aspect of the relationship between equilibrium statistical mechanics and the second law of thermodynamics for generic small systems.

\subsection{Main Results}

The argument in subsection \ref{subsec:idealgas} suggests that statistical properties of equilibrium states are characterized by the second law of thermodynamics. In this paper, we extend the above example to a system with a generic Hamiltonian on a finite state space. First, we consider the situation that the external agent has an ability to prepare arbitrary Hamiltonians during cyclic operations. We show that there exists a strategy for the agent such that the empirical distribution for a sequence satisfying the second law of thermodynamics necessarily converges to the Gibbs distribution of the initial Hamiltonian (Theorem \ref{thm:np}). As in the case of the single-particle ideal-gas engine, this result can be interpreted as a statement that the empirical statistics must be consistent with the assumption of statistical mechanics, i.e., the Gibbs distribution, due to the second law.


Second, we study the empirical statistics in the same manner when we restrict the ability of the agent. The assumption that the agent can prepare arbitrary Hamiltonians is too demanding for his ability because we control only a small number of parameters in the Hamiltonian, such as the magnetic field, in many physical situations.  The main aim of the present paper is to propose a protocol corresponding to such a  restricted situation and to determine what statistical properties are observed in that situation. This restriction weakens the ability of the agent, and we thus expect that the statistical property the second law imposes also weakens. We find that the Gibbs distributions for Hamiltonians having a small number of parameters linearly coupled to conjugate variables form an exponential family and propose a new protocol where the agent has to construct his strategy by changing only these parameters. Our main contribution is that in the protocol there exists a strategy such that the empirical mean of the conjugate variable with respect to the control parameter for sequence satisfying the second law converges to the equilibrium value (Theorem \ref{thm:p}). This result suggests that there is a hierarchy of statistical properties observed under the second law according to his ability.

\subsection{Related Studies}

Several studies shares the same mathematical structures and techniques with our work, although our motivation and formalism in this paper is quite different from the usual one in statistical mechanics and stochastic thermodynamics. We review related studies in this subsection.

First, an important property of the exponential of the accumulation of the extracted work is martingality, which is a fundamental concept in the theory of stochastic processes \cite{williams,doob}. The martingale property is useful to investigate the statistics at stopping times and extreme value statistics. In the context of nonequilibrium thermodynamics, the novel statistical properties of stochastic entropy production were recently studied \cite{cg,nrj,neri,msmfpr} based on this property of martingales and the fact that the exponentiated negative entropy production or its modification is a martingale. Although our results are also based on the fact that the exponential of the accumulation of the extracted work is a martingale\footnote{In externally driven systems, the exponentiated negative entropy production in a time interval is not martingale in general \cite{cg,neri,msmfpr}. In this paper, however, we do not consider the stochastic time evolution of the system explicitly and concentrate our interest on the sum of the extracted work obtained from statistically independent experiments. Therefore, the exponentiated negative entropy production in this paper is indeed a martingale. See Appendix \ref{ap:mg}.}, we use another property that was first found by Jean Ville \cite{ville}, the characterization of almost sure properties in terms of martingales. See subsection \ref{subsec:ville}. As other interesting study concerning the martingale property in physical systems, see \cite{ms} for example.

Second, we use universal coding theory \cite{grunwald} to construct a strategy for the agent. The prediction strategy in universal coding is useful for proving our main result because our problem is similar to the coding or prediction of the outcomes $\omega_n$ in the $n$-th cycle from the past sample data $\omega_1 \dots \omega_{n-1}$ where the performance of the prediction is measured by the log-loss function. The analogy between gambling and source coding problems was first discussed by Kelly \cite{kelly}. See also Chapter 6 of \cite{ct}. In addition, the analogy with information thermodynamics was pointed out in \cite{vpm,ito,tmmr}. Specifically, Refs. \cite{tmmr} applies universal coding theory to information thermodynamics to construct an optimal work extraction protocol. A crucial difference of our work from this study is that we analyze the asymptotic behavior of extracted work for \textit{individual} sequences \cite{vpm,kg} and attempt to find statistical properties shared by sequences satisfying the second law.

Finally, we stress that our studies are based on an analogy between the work extraction in thermodynamics and betting in game-theoretic probability theory \cite{sv1,sv2}. Game-theoretic probability theory is a mathematical formulation of probability theory alternative to the conventional measure-theoretic one. In game-theoretic probability theory, a gambler named $\Skeptic$ bets money on head or tail of a coin and a dealer named $\Reality$ choose the outcome. By repeating this gamble infinitely many times and imposing the duty that $\Reality$ must chooses a sequence of outcomes such that $\Skeptic$ cannot make infinitely much money, we study what statistical behavior is observed. For instance, as mentioned in subsection \ref{subsec:idealgas}, it is possible to construct an explicit strategy such that $\Skeptic$'s capital grows infinitely as long as the sequence violates the law of large numbers. The work extraction we proposed in this paper can be regarded as a game played between two players, $\Skeptic$ and $\Reality$. The external agent who attempts to extract work from the heat engine as much as possible corresponds to the gambler $\Skeptic$ and the world who prepares the particle to retain the second law of thermodynamics corresponds to the dealer $\Reality$. Although the first main result of this paper in Section \ref{sec:gh} is a straightforward extension of the theorem proved by Shafer and Vovk, the possibility of analogous analysis of thermodynamics and the novel aspect of the relationship between statistical mechanics and the second law are new findings of this paper as long as the author knows.

\subsection{Outline of the paper}

The remainder of the paper is organized as follows. In Section \ref{sec:gh}, we formulate the work extraction in a similar manner to the single-particle ideal-gas engine in subsection \ref{subsec:idealgas} and give the first main result. We also discuss the mathematical backgrounds, Ville's theorem, behind our result. In Section \ref{sec:ph}, we propose another protocol where the ability of the agent is restricted, and prove the second main result. In Section \ref{sec:gi}, we discuss a game-theoretic interpretation of our protocols. We end our paper with concluding remarks in Section \ref{sec:cr}.

\subsection{Notations}

This subsection summarize notations we use throughout this paper. Since the author explain notations when they are first used, the readers can skip this subsection.

Let $\mathbb{N}_+ = \{ 1, 2, \dots \}$ be the set of positive integers. We use $\Omega$ to denote a finite set representing a microscopic state space. $\Omega^* \coloneqq \{ \square \} \cup (\cup_{n=1}^{\infty} \Omega^n)$ denotes the set of finite strings over $\Omega$, where $\square$ is the empty string. A string over $\Omega$ with length $n$ is written as $\omega^n = \omega_1 \dots \omega_n \in \Omega^*$, $\omega_i \in \Omega$. We use $\xi = \omega_1 \omega_2 \dots \in \Omega^{\mathbb{N}_{+}}$ to denote an infinite sequence on $\Omega$. A real-valued function $H:\Omega \to \mathbb{R}$ defines a Hamiltonian on the state space $\Omega$. For a positive real number $\beta > 0$, the Gibbs distribution for the Hamiltonian $H$ at the inverse temperature $\beta$ is defined as a probability distribution on $\Omega$ with density $g_{\beta H} (\omega) \coloneqq e^{- \beta (H(\omega) - F_{\beta}(H))}$ with respect to the counting measure, where $F_{\beta}(H) \coloneqq - \beta^{-1} \ln \sum_{\omega \in \Omega} e^{- \beta H(\omega)}$ is the free energy.

\section{Analysis of Generic Hamiltonians}
\label{sec:gh}

\subsection{Setup}
\label{subsec:setup}

Let us consider a physical system whose state space is given by a finite set $\Omega$. The thermodynamic property of the  system in contact with the heat bath is described by a Hamiltonian $H : \Omega \to \mathbb{R}$ and an inverse temperature $\beta$ of the bath. According to Kelvin's principle, which is one of the representation of the second law of thermodynamics, the positive amount of work cannot be extracted by any cyclic operations. Here an operation is said to be cyclic if the initial and final Hamiltonians coincide. This principle leads to the absence of the perpetual motion machine of the second kind. However, the second law of thermodynamics may be violated for some individual initial state and cyclic operation in the finite system. Therefore, the second law in small systems is usually formulated as a statement on the non-positivity of the \textit{expectation value} of the extracted work assuming the underlying probability distribution. In this subsection, we review the usual formulation of the second law of thermodynamics in terms of the expectation value.

Let us consider the following type of cyclic process \cite{evdb} to avoid taking the dynamical evolution of the system into consideration:
\begin{itemize}
 \setlength{\leftskip}{3.0mm}
 \item[(P1)] The agent quenches the Hamiltonian adiabatically from the initial Hamiltonian $H$ to another one $H^{\prime}$.
 \item[(P2)] The agent equilibrates the system with the inverse temperature $\beta$.
 \item[(P3)] The agent resets the system quasi-statically and isothermally
\end{itemize}
Let $\omega$ be an initial state of the system. We suppose that the microscopic state $\omega$ does not change during the adiabatic quenching process (P1). Under this assumption, the extracted work in the process (P1) is given by the decrease in internal energy $H(\omega) - H^{\prime}(\omega)$. In the process (P2), the agent touches the system with the heat bath having the inverse temperature $\beta$ and the system relaxes to the new equilibrium state for the quenched Hamiltonian $H^{\prime}$. This equilibration process obviously requires no mechanical work. In the process (P3), the agent changes the Hamiltonian from the quenched one $H^{\prime}$ to the initial one $H$ quasi-statically to make the whole process cyclic. Moreover, we crucially assume that the work extracted in the quasi-static isothermal process is equal to the decrease in free energy\footnote{We discuss the status of this assumption after Theorem \ref{thm:np} and Section \ref{sec:cr}.}. This assumption implies that the extracted work in the process (P3) is given by $F_{\beta}(H^{\prime}) - F_{\beta}(H)$. Here the free energy for the Hamiltonian $H$ at the inverse temperature $\beta$ is defined as $F_{\beta}(H) \coloneqq - \beta^{-1} \ln \sum_{\omega \in \Omega} e^{- \beta H(\omega)}$. Therefore, the total amount of work extracted in this cyclic process is given by $W(\omega_1) \coloneqq H(\omega) - H^{\prime}(\omega) + F_{\beta}(H^{\prime}) - F_{\beta}(H)$.

Now we suppose that the initial state is sampled according to an initial density $\rho$. The expectation value of the work extracted during the above cyclic process is given by
\begin{align}
 \mathbb{E}_{\rho} [ W ] = \sum_{\omega \in \Omega} \rho (\omega) \left[ H(\omega) - H^{\prime}(\omega) + F_{\beta}(H^{\prime}) - F_{\beta}(H) \right] = D ( \rho \| g_{\beta H} ) - D(\rho \| g_{\beta H^{\prime}}).
\end{align}
Here $g_{\beta H} (\omega) \coloneqq e^{- \beta (H(\omega) - F_{\beta}(H))}$ is the Gibbs density function for the Hamiltonian with the inverse temperature $\beta$ and $D(p \| q) \coloneqq \sum_{\omega \in \Omega} p(\omega) \ln (p(\omega) / q(\omega))$ is the Kullback-Leibler divergence between two densities $p$ and $q$. If the initial distribution is the Gibbs distribution for the initial Hamiltonian, the expectation value of the extracted work is non-positive for any quenched Hamiltonian $H^{\prime}$, i.e., $\mathbb{E}_{\rho} [ W ] = - D(\rho \| g_{\beta H^{\prime}} ) \leq 0$. Conversely, if $\rho \neq g_{\beta H}$, the value $\mathbb{E}_{\rho}[W]$ can be positive. Indeed, by choosing $H^{\prime}$ such that $\rho = g_{\beta H^{\prime}}$, we have that $\mathbb{E}_{\rho}[W] = D(\rho \| g_{\beta H}) > 0$. Therefore, the second law of thermodynamics expressed in the form of the expectation value during the cyclic process (P1)-(P3) is equivalent to that the initial distribution is the Gibbs distribution for the initial Hamiltonian.

The single-particle ideal-gas engine in subsection \ref{subsec:idealgas} is formally considered to be an example of the above setup. 

\begin{exmp}[Single-particle ideal-gas engine]
Let $\Omega = \{ 0, 1 \}$ be a state space. Each state $\omega \in \{ 0, 1 \}$ codes the position of the particle in the box. The effective Hamiltonian for the single-particle ideal gas is defined as $H_{\mu} ( \omega ) = - \beta^{-1} \ln \mu^{\omega} (1-\mu)^{1-\omega} V$, where $V$ is the total volume of the box and the parameter $\mu \in (0,1)$ indicates the position of the barrier. The free energy for the Hamiltonian is given by $F_{\beta}(H_{\mu}) = - \beta^{-1} \ln V$. If the initial and quenched parameters are $1/2$ and $\mu$, respectively, the extracted work is written as
\begin{align}
 W(\omega) = H_{1/2}(\omega) - H_{\mu}(\omega) + F_{\beta}(H_{\mu}) - F_{\beta}(H_{1/2}) = \beta^{-1} \ln 2 \mu^{\omega} (1-\mu)^{1-\omega},
\end{align}
which is identical to Eq. (\ref{eq:spigw}).
\end{exmp}

\subsection{Prequential formulation}
\label{subsec:preq}

The purpose of the present paper is to investigate the emergence of equilibrium statistical mechanics from the second law of thermodynamics without referring to any probability measure. Hence we have to remove the probability measure from the definition of the second law. In this subsection, as such a formulation, we give a prequential definition of the second law.

To remove the probability distribution, we consider the situation that the agent repeats cyclic processes (P1)-(P3) infinitely many times. First, the agent performs a cyclic process according to the protocol (P1)-(P3). Let $\omega_1$ be an initial state and $H_1$ a quenched Hamiltonian in this process. After the first cycle, the agent determines a quenched Hamiltonian $H_2$ in the second cycle depending on the initial state $\omega_1$ in the first cycle and performs the cyclic process (P1)-(P3) again. In general, we suppose that the agent chooses a quenched Hamiltonian in the $n$-th cycle depending on the past history $\omega_1 \dots \omega_{n-1} \in \Omega^{n-1}$ up to the $(n-1)$-th cycle. The assignment of a quenched Hamiltonian in the $n$-th cycle to each past history $\omega_1 \dots \omega_{n-1}$ specifies a strategy for the agent to extract work. Therefore, we call a function $\hat{H} : \Omega^* \to \mathbb{R}^{\Omega}$ \textit{strategy} in this paper. Here $\Omega^*$ denotes the set of finite strings over $\Omega$ including the empty string $\square$. For a strategy $\hat{H}$, $\hat{H}(\cdot | \square): \Omega \to \mathbb{R}$ represents a quenched Hamiltonian in the first cycle and $\hat{H}(\cdot | \omega_1 \dots \omega_{n-1}): \Omega \to \mathbb{R}$ represents a quenched Hamiltonian in the $n$-th cycle when the initial states up to the $(n-1)$-th cycle are $\omega_1 \dots \omega_{n-1}$. This scheme in which the agent decides his action based on the past history of outcomes is called \textit{prequential} in statistics \cite{dawid} and \textit{causal} or \textit{nonanticipating} in information theory of gambling and portfolio theory \cite{ct}.

The accumulation of the extracted work up to the $n$-th cycle is given by the sum of the extracted work in each cycle. For a strategy $\hat{H}$, we define the function $W^{\hat{H}} : \Omega^* \to \mathbb{R}$ as $W^{\hat{H}}(\square) = 0$ and
\begin{align} \label{eq:work}
 W^{\hat{H}}(\omega^n) &\coloneqq \sum_{i=1}^{n} \left[ H(\omega_i) - \hat{H} (\omega_i | \omega^{i-1}) + F_{\beta}(\hat{H}(\cdot | \omega^{i-1})) - F_{\beta}(H) \right]
\end{align}
for $\omega^n \coloneqq \omega_1 \dots \omega_n \in \Omega^n$. From the definition, $W^{\hat{H}}(\omega^n)$ gives the accumulation of the work up to the $n$-th cycle under the strategy $\hat{H}$ when the initial states up to the $n$-th cycle are $\omega_1 \dots \omega_n$.

Now we provide a prequential definition of the second law of thermodynamics. 

\begin{defin} \label{def:2l}
Let $\xi = \omega_1 \omega_2 \dots \in \Omega^{\mathbb{N}_+}$ be an infinite sequence over $\Omega$ and $\hat{H}$ be a strategy for the agent. We say that
\begin{itemize}
 \item[(1)] $\xi$ \textit{violates weakly} the second law of thermodynamics under the strategy $\hat{H}$ if
 \begin{align}
 \sup_{n} W^{\hat{H}} (\omega_1 \dots \omega_n) = \infty,
 \end{align}
 \item[(2)] $\xi$ \textit{violates} the second law of thermodynamics under the strategy $\hat{H}$ if
 \begin{align}
 \lim_{n \to \infty} W^{\hat{H}} (\omega_1 \dots \omega_n) = \infty.
 \end{align}
\end{itemize}
\end{defin}

Definition \ref{def:2l} is regarded as a definition of the perpetual motion machine of the second kind for individual sequences of states. Let $\omega_1$ be an initial microscopic state of the engine in the first cycle. By performing the cyclic process (P1)-(P3) for the initial state, the agent extract work from the engine by $W^{\hat{H}}(\omega_1)$. The engine cannot be regarded as a perpetual motion machine of the second kind only because the value $W^{\hat{H}}(\omega_1)$ is positive. To say that the engine violates the second law, we require that for any given positive value $W_0 > 0$, the agent should extract an amount of work larger than $W_0$ by repeating cyclic processes as many times as he needs. Therefore, we define the second law for \textit{infinite sequences of initial microscopic states of the engine} as indicated in Definition \ref{def:2l}. We note that while in the above definition the violation of the second law depends on both a strategy the agent applies and an infinite sequences of microscopic states, the definition needs no underlying probability measure.

\subsection{Convergence of empirical distribution to Gibbs distribution}

Instead of introducing probability distributions, we consider which sequences satisfies the second law of thermodynamics and what statistical properties are shared among these sequences. In general, a statistical property is described by a subset of infinite sequences $E \subseteq \Omega^{\mathbb{N}_{+}}$. In this paper, as a relevant statistical property, we focus on only the strong law of large numbers, i.e.,
\begin{align} \label{eq:gslln}
 \mathsf{SLLN} = \left\{ \xi = \omega_1 \omega_2 \dots \in \Omega^{\mathbb{N}_{+}} : \lim_{n \to \infty} \frac{1}{n} \sum_{i=1}^{n} 1_{\{ \omega \}} (\omega_i) = g_{\beta H} (\omega) \ \text{for all } \omega \in \Omega \right\}.
\end{align}
Here the quantity $\sum_{i=1}^{n} 1_{\{ \omega \}}(\omega_i) / n$ is the empirical density for the string $\omega^n$ quantifying the relative frequency of $\omega$ in $\omega^n = \omega_1 \dots \omega_n$. If the sequence $\xi = \omega_1 \omega_2 \dots$ are sampled identically and independently according to the Gibbs distribution, the SLLN (\ref{eq:gslln}) happens almost surely. In this sense, the SLLN (\ref{eq:gslln}) is a statistical property observed in equilibrium. Our purpose is to find a strategy $\hat{H}$ such that any sequences $\xi \in \Omega^{\mathbb{N}_{+}}$ satisfying the second law of thermodynamics under the strategy $\hat{H}$ in the sense of Definition \ref{def:2l} necessarily obey the SLLN, i.e., $\xi \in \mathsf{SLLN}$. The existence of such strategy implies that the empirical statistics of the sequences of initial states observed under the constraint by the second law must be consistent with equilibrium statistical mechanics. Based on the above reasoning, we provide the following theorem:

\begin{thm} \label{thm:np}
Let $g_{\beta H} (\omega) = e^{- \beta H(\omega) + \beta F_{\beta}(H)}$ be the Gibbs density function for the initial Hamiltonian $H$ at the inverse temperature $\beta$. There exists a strategy $\hat{H}$ such that if an infinite sequences $\xi \in \Omega^{\mathbb{N}_{+}}$ of initial states does not satisfy the SLLN, i.e., $\xi \not\in \mathsf{SLLN}$, then the sequence $\xi$ violates the second law of thermodynamics under the strategy.
\end{thm}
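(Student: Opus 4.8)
The plan is to recast the statement as one about nonnegative martingales for the i.i.d.\ Gibbs measure, and then to combine Ville's theorem with the classical strong law of large numbers; an explicit strategy can then be read off from universal coding theory.

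First I would put the accumulated work $(\ref{eq:work})$ into exponential form. For a strategy $\hat H$, let $q_i(\,\cdot\,|\,\omega^{i-1})\coloneqq g_{\beta\hat H(\cdot\,|\,\omega^{i-1})}$ be the Gibbs density of the Hamiltonian quenched in the $i$-th cycle; by the definition of $F_\beta$ this is a strictly positive probability density on $\Omega$ whatever $\hat H$ is. Since $\ln g_{\beta H}(\omega)=-\beta H(\omega)+\beta F_\beta(H)$, and likewise for $q_i$, each summand of $(\ref{eq:work})$ equals $\beta^{-1}\ln\!\big[q_i(\omega_i\,|\,\omega^{i-1})/g_{\beta H}(\omega_i)\big]$, whence
\begin{align}
 e^{\beta W^{\hat H}(\omega^n)} \;=\; \prod_{i=1}^{n}\frac{q_i(\omega_i\,|\,\omega^{i-1})}{g_{\beta H}(\omega_i)} \;=\; \frac{Q(\omega^n)}{g_{\beta H}^{\otimes n}(\omega^n)},
\end{align}
with $Q(\omega^n)\coloneqq\prod_{i=1}^{n}q_i(\omega_i\,|\,\omega^{i-1})$ the probability measure on $\Omega^{\mathbb{N}_+}$ having the $q_i$ as its successive conditionals. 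So $e^{\beta W^{\hat H}}$ is a likelihood ratio, hence a strictly positive martingale with unit initial value relative to $P\coloneqq g_{\beta H}^{\otimes\mathbb{N}_+}$; conversely, any strictly positive $P$-martingale $M$ with $M_0=1$ arises from the strategy $\hat H(\omega\,|\,\omega^{i-1})\coloneqq-\beta^{-1}\ln\!\big[g_{\beta H}(\omega)\,M(\omega^{i-1}\omega)/M(\omega^{i-1})\big]$, the bracket summing to one over $\omega$ exactly by the martingale identity. Thus constructing the strategy of Theorem $\ref{thm:np}$ is equivalent to constructing a nonnegative $P$-martingale that diverges to $+\infty$ on $\mathsf{SLLN}^c$.

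Second, under $P$ the coordinates are i.i.d.\ with law $g_{\beta H}$, so $\mathbb{E}_P[1_{\{\omega\}}(\omega_1)]=g_{\beta H}(\omega)$ and the ordinary strong law (trivial since $\Omega$ is finite) gives $P(\mathsf{SLLN}^c)=0$. Ville's theorem (subsection $\ref{subsec:ville}$) then delivers a nonnegative $P$-martingale diverging to $+\infty$ on $\mathsf{SLLN}^c$, and the dictionary above converts it into the desired $\hat H$. A concrete choice is to take $Q$ to be a Bayesian mixture $\int_{\Delta(\Omega)}r^{\otimes n}\,d\pi(r)$ over i.i.d.\ product densities against a full-support Dirichlet prior $\pi$ --- the universal coding estimator --- for which the familiar redundancy bound gives, uniformly in $\omega^n$,
\begin{align}
 \beta\,W^{\hat H}(\omega^n)\;=\;n\,D\!\big(\hat p_n\,\big\|\,g_{\beta H}\big)\;-\;\tfrac{|\Omega|-1}{2}\ln n\;+\;O(1),
\end{align}
where $\hat p_n(\omega)=\tfrac1n\sum_{i\le n}1_{\{\omega\}}(\omega_i)$. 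If $\xi\notin\mathsf{SLLN}$ then $\|\hat p_n-g_{\beta H}\|_1\ge\delta$ along some subsequence for a $\delta>0$, so by Pinsker's inequality the right-hand side tends to $+\infty$ along that subsequence, and this strategy already realizes the \emph{weak} violation of Definition $\ref{def:2l}$(1) on all of $\mathsf{SLLN}^c$.

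The point requiring real care is upgrading this $\limsup$ to the genuine limit demanded by Definition $\ref{def:2l}$(2), which is what ``violates'' means in Theorem $\ref{thm:np}$. The Bayesian-mixture capital may oscillate: on subsequences where $\hat p_n$ returns within $O(\sqrt{\ln n/n})$ of $g_{\beta H}$, the term $n D(\hat p_n\|g_{\beta H})$ need not dominate $\tfrac{|\Omega|-1}{2}\ln n$, so a single mixture strategy need not force $\lim_n W^{\hat H}(\omega^n)=\infty$. I would cure this with the standard stopped-martingale trick: with $M\coloneqq e^{\beta W^{\hat H}}$ and $\tau_j\coloneqq\inf\{n:M_n\ge 2^j\}$, the mixture $N\coloneqq\sum_{j\ge1}2^{-j}M_{\,\cdot\,\wedge\,\tau_j}$ is again a strictly positive martingale with $N_0=1$, obeys $N_n(\xi)\to\infty$ whenever $\sup_n M_n(\xi)=\infty$, and hence corresponds via the dictionary to a strategy witnessing the full violation. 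Alternatively, bypassing universal coding, one applies Ville directly to the Borel $P$-null set $\mathsf{SLLN}^c=\bigcup_{\omega\in\Omega}\{\xi:\hat p_n(\omega)\not\to g_{\beta H}(\omega)\}$ in the compact space $\Omega^{\mathbb{N}_+}$: outer regularity gives cylinder-generated open $U_k\supseteq\mathsf{SLLN}^c$ with $P(U_k)<4^{-k}$, the normalized conditional-probability martingales $P(U_k\mid\mathcal{F}_n)/P(U_k)$ converge on $U_k$ to $1/P(U_k)>2^k$, and their $2^{-k}$-mixture diverges on $\mathsf{SLLN}^c$. The remaining work --- the redundancy estimate and the verifications that these mixtures are genuine nonnegative martingales --- is routine.
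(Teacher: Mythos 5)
Your proposal is correct, but it reaches Theorem \ref{thm:np} by a different route than the paper. The paper does not prove this theorem directly: it realizes the family of all strictly positive densities on $\Omega$ as a minimal exponential family (Example \ref{ex:pos}, with $\phi_i = 1_{\{i\}}$, so the empirical mean of $\phi$ \emph{is} the empirical density) and obtains Theorem \ref{thm:np} as the special case of Theorem \ref{thm:p}, whose proof uses the \emph{maximum likelihood plug-in} strategy $\hat{\theta}_{\mathrm{ML}}$ restricted to a compact $\Theta_0$, the Kot\l owski--Gr\"unwald regret bound (Lemma \ref{lem:kg}), and the stopped-strategy mixture (Lemma \ref{lem:wff}). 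You instead use either Ville's theorem directly or a \emph{Bayesian mixture} (Dirichlet/Krichevsky--Trofimov) universal code with its $\frac{|\Omega|-1}{2}\ln n + O(1)$ redundancy bound. The two arguments share the same skeleton --- exponentiated work as a likelihood-ratio martingale (your dictionary is exactly the paper's Appendix B), a pointwise bound of the form $\beta W^{\hat H}(\omega^n) \geq n D(\hat{p}_n \,\|\, g_{\beta H}) - O(\ln n)$, Pinsker to get weak forcing on $\mathsf{SLLN}^c$, and the stopped-mixture upgrade from $\sup_n W = \infty$ to $\lim_n W = \infty$, which you correctly identify as the one delicate point and which is precisely Lemma \ref{lem:wff} phrased at the level of martingales rather than strategies. (Your displayed redundancy identity should strictly be a lower bound $\geq$, but that is all you use.) What the paper's choice buys is uniformity with Section \ref{sec:ph}: the ML predictor $p_{\hat{\theta}_{\mathrm{ML}}(\omega^{n-1})}$ always lies inside the exponential family, so the same proof covers the restricted parametric protocol, whereas the Bayesian mixture's predictive conditionals are generally \emph{not} members of $\mathcal{P}(\Theta,\phi,h)$ and hence are not realizable strategies there. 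What your route buys is a shorter, self-contained proof of Theorem \ref{thm:np} alone, together with an explicit proof of the relevant direction of Ville's theorem (your outer-regularity construction), which the paper only cites; the paper itself acknowledges both alternatives (the Bayesian strategy of \cite{ktt} and the Shafer--Vovk textbook proof) before opting for the ML route.
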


This is the first main result of this paper. We make several remarks on Theorem \ref{thm:np}. First, we stress again that no probability measure for initial microscopic states are assumed. One may find that the Gibbs distribution is implicitly inserted in the definition of the extracted work (\ref{eq:work}) through the assumption that the work extracted in the quasi-static isothermal process is equal to the decrease in free energy. Although this reasoning is actually true as we will see in subsection \ref{subsec:ville}, the assumption on the form of the extracted work does \textit{not} immediately leads to the stochastic behavior of initial states and the content of Theorem \ref{thm:np} remains highly non-trivial. Second, we stress the difference from the argument in the end of subsection \ref{subsec:setup}. There, if the initial distribution deviates from the Gibbs distribution for the initial Hamiltonian, we have to choose the quenched Hamiltonian depending on the deviation in order to extract a positive amount of work. In contrast, Theorem \ref{thm:np} claims the existence of a single \textit{universal} strategy under which the second law is automatically violated for the sequence whose empirical statistics deviate from the Gibbs distribution even if we do not know the deviation.

We explain in subsection \ref{subsec:ville} the reason why the Gibbs distribution appears in Theorem \ref{thm:np} although no probability measure is assumed in our setting. We remark that Theorem \ref{thm:np} is just a straightforward extension of the case of single-particle ideal-gas engine in subsection \ref{subsec:idealgas} and it is nothing new mathematically. In addition to the proof in textbooks of Shafer and Vovk \cite{sv1,sv2}, there are several proofs of Theorem \ref{thm:np} such as Ref. \cite{kt} based on the maximum likelihood strategy and Ref. \cite{ktt} based on the Bayesian strategy. Nevertheless we prove Theorem \ref{thm:np} as a special case of Theorem \ref{thm:p}.

Hereafter, we say that a strategy \textit{forces} (resp. \textit{weakly forces}) an event $E \subseteq \Omega^{\mathbb{N}_{+}}$ if infinite sequences over $\Omega$ that do not satisfy $E$ violate (resp. weakly violate) the second of thermodynamics under strategy. According to this terminology, Theorem \ref{thm:np} claims the existence of a strategy that forces the strong law of large numbers for the empirical distribution.

\subsection{Martingale and Ville's theorem}
\label{subsec:ville}

We clarify a general structure behind Theorem \ref{thm:np}. Although we construct explicitly a strategy that make the empirical distribution converge the Gibbs distribution in Appendix \ref{ap:proof}, Theorem \ref{thm:np} follows from a more general theorem proved by Ville \cite{ville}. The essence of Ville's theorem is some kind of equivalence between the asymptotic behavior of martingales and almost sure properties.

To see this, we first clarify the martingale property of the exponentiated work. The accumulation of the extracted work (\ref{eq:work}) can be written as the logarithmic likelihood ratio function for Gibbs distributions:
\begin{align} \label{eq:llr}
 \beta W^{\hat{H}} (\omega^n) = \ln \frac{\hat{q} (\omega^n)}{g_{\beta H}^n (\omega^n)},
\end{align} 
where $g_{\beta H}^n (\omega^n) \coloneqq \prod_{i=1}^{n} g_{\beta H} (\omega_i)$ and 
\begin{align} \label{eq:qhat}
 \hat{q} (\omega^n) \coloneqq \prod_{i=1}^n g_{\beta \hat{H}(\cdot | \omega^{i-1})}(\omega_i).
\end{align}
Since $g_{\beta \hat{H}(\cdot | \omega^{i-1})}$ specifies the conditional probability density conditioned on the past history $\omega^{i-1}$, the function (\ref{eq:qhat}) gives a probability density on $\Omega^n$. Conversely, a stochastic process on $\Omega^{\mathbb{N}_{+}}$ with strictly positive probability densities $\hat{q}$ specifies a strategy for the agent in our setting through the relation $g_{\beta \hat{H}(\cdot | \omega^{n-1})} (\omega_n) = \hat{q} ( \omega_n | \omega^{n-1})$. Therefore, a strategy $\hat{H}$ is identified with a stochastic process having strictly positive densities.

Let us consider a discrete time stochastic process $M: \Omega^* \to \mathbb{R}$. We say that $M$ is a martingale with respect to a probability measure $P$ on $\Omega^{\mathbb{N}_{+}}$ if $\mathbb{E}_{P} [ M_n | \omega^{n-1} ] = M(\omega^{n-1})$ for any $\omega^{n-1} \in \Omega^*$ \cite{williams}. Here $M_n (\omega_1 \omega_2 \dots) \coloneqq M(\omega^n)$ and $\mathbb{E}_P [ \ \cdot \ | \omega^n]$ denotes the conditional expectation conditioned on the past history $\omega^n$. For a fixed strategy $\hat{H}$, it is easy to see that the exponential of the accumulation of the extracted work $e^{\beta W^{\hat{H}}}$ is a positive martingale with respect to the infinite product of Gibbs distributions $g_{\beta H}^{\mathbb{N}_+}$,
\begin{align*}
 \mathbb{E}_{g_{\beta H}^{\mathbb{N}_+}} [ e^{\beta W_n^{\hat{H}}} \mid \omega^{n-1} ] = e^{\beta W^{\hat{H}}(\omega^{n-1})},
\end{align*}
In particular, $\mathbb{E}_{g_{\beta H}^{\mathbb{N}_+}} [ e^{\beta W^{\hat{H}}_n} ] = e^{\beta W^{\hat{H}}(\square)} = 1$. Conversely, for a given positive martingale $M$ with respect to $g_{\beta H}^{\mathbb{N}_{+}}$ starting from $M(\square) = 1$, there exists a strategy $\hat{H}$ such that $e^{\beta W^{\hat{H}}(\omega^n)} = M(\omega^n)$ for any $\omega^n \in \Omega^*$. See Appendix \ref{ap:mg} for a proof of this statement.

Ville's theorem \cite{ville,sv1,sv2} in our setting claims that \textit{a measurable set $E \subseteq \Omega^{\mathbb{N}_{+}}$ has probability one with respect to $g_{\beta H}^{\mathbb{N}_{+}}$ if and only if there exists a positive martingale $M: \Omega^* \to \mathbb{R}_{+}$ with respect to the same probability measure such that $\lim_{n \to \infty} M_n (\xi) = \infty$ for any $\xi  \not\in E$}. Using the equivalence between positive martingales and strategies mentioned above, we rephrase Ville's theorem as follows.

\begin{thm}[\cite{ville,sv1,sv2}] \label{thm:ville}
Let $E \subseteq \Omega^{\mathbb{N}_{+}}$ be a measurable set. The property $E$ happens almost surely with respect to $g_{\beta H}^{\mathbb{N}_{+}}$ if and only if there exists a strategy $\hat{H} : \Omega^* \to \mathbb{R}^{\Omega}$ that forces $E$.
\end{thm}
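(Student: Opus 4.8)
The plan is to deduce the theorem from the classical form of Ville's theorem together with the dictionary, established in the paragraphs above, between strategies $\hat H$ and strictly positive martingales $M\colon\Omega^*\to\mathbb{R}_+$ with $M(\square)=1$ relative to $g_{\beta H}^{\mathbb{N}_+}$. Under that dictionary $M(\omega^n)=e^{\beta W^{\hat H}(\omega^n)}$, so the clause ``$\xi$ violates the second law under $\hat H$'' reads ``$\lim_{n\to\infty}M_n(\xi)=\infty$'', and ``$\hat H$ forces $E$'' reads ``$M_n(\xi)\to\infty$ for every $\xi\notin E$''. It therefore suffices to prove the equivalence: $g_{\beta H}^{\mathbb{N}_+}(E)=1$ if and only if there exists a strictly positive martingale $M$ with $M(\square)=1$ that tends to $\infty$ everywhere off $E$.

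For the ``if'' direction I would simply invoke Doob's martingale convergence theorem: a nonnegative martingale converges $g_{\beta H}^{\mathbb{N}_+}$-almost surely to a finite limit, so $\{\xi:M_n(\xi)\to\infty\}$ is $g_{\beta H}^{\mathbb{N}_+}$-null; since $E$ is measurable and $\Omega^{\mathbb{N}_+}\setminus E$ is contained in that set, $g_{\beta H}^{\mathbb{N}_+}(E)=1$. Nothing delicate occurs here.

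The work is in the ``only if'' direction. Assume $N\coloneqq\Omega^{\mathbb{N}_+}\setminus E$ is null (we may take $N\neq\emptyset$). Since $\Omega$ is finite, $\Omega^{\mathbb{N}_+}$ with the product topology is a compact metrizable space and $g_{\beta H}^{\mathbb{N}_+}$ is an outer-regular Borel measure, so for each $k\in\mathbb{N}_+$ I can pick an \emph{open} set $U_k\supseteq N$ with $g_{\beta H}^{\mathbb{N}_+}(U_k)\le 2^{-k}$; being nonempty and open, $U_k$ contains a cylinder, so its measure is strictly positive because $g_{\beta H}>0$. Writing $\mathcal{F}_n$ for the $\sigma$-algebra generated by the first $n$ coordinates, set $M^{(k)}_n\coloneqq g_{\beta H}^{\mathbb{N}_+}(U_k\mid\mathcal{F}_n)/g_{\beta H}^{\mathbb{N}_+}(U_k)$; by the tower property each $M^{(k)}$ is a nonnegative martingale with $M^{(k)}(\square)=1$, and because $U_k$ is a union of cylinders, for each $\xi\in U_k$ the prefix witnessing membership in $U_k$ has finite length, whence $g_{\beta H}^{\mathbb{N}_+}(U_k\mid\mathcal{F}_n)(\xi)=1$ for all large $n$ and $M^{(k)}_n(\xi)\to 1/g_{\beta H}^{\mathbb{N}_+}(U_k)\ge 2^{k}$. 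I would then define
\begin{align*}
 M\coloneqq\tfrac12+\tfrac12\sum_{k=1}^{\infty}2^{-k}M^{(k)}.
\end{align*}
By conditional monotone convergence $M$ is a nonnegative martingale with $M(\square)=1$, and $M\ge\tfrac12>0$ on all of $\Omega^*$, so $M$ is strictly positive and hence, by the dictionary above (cf.\ Appendix~\ref{ap:mg}), equals $e^{\beta W^{\hat H}}$ for a genuine strategy $\hat H$. Finally, for $\xi\in N$ one has $\xi\in U_k$ for every $k$, so for each fixed $K$,
\begin{align*}
 \liminf_{n\to\infty}M_n(\xi)\ \ge\ \tfrac12\sum_{k=1}^{K}2^{-k}\lim_{n\to\infty}M^{(k)}_n(\xi)\ \ge\ \tfrac{K}{2},
\end{align*}
and sending $K\to\infty$ gives $M_n(\xi)\to\infty$, i.e.\ $\hat H$ forces $E$.

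I expect the main obstacle to be precisely this construction: arranging that the patched martingale $M$ is simultaneously a legitimate martingale with $M(\square)=1$, \emph{strictly} positive on $\Omega^*$ so that it corresponds to a strategy, and divergent on \emph{all} of $N$ rather than merely $g_{\beta H}^{\mathbb{N}_+}$-almost everywhere. The device that makes the last point work — and the reason openness of the $U_k$ is essential — is that membership in an open set is decided by a finite prefix, so the conditional probabilities $g_{\beta H}^{\mathbb{N}_+}(U_k\mid\mathcal{F}_n)(\xi)$ actually hit $1$ in finitely many steps for every $\xi\in U_k$; the outer-regularity step producing the $U_k$ of geometrically small positive measure is the other place requiring a little care.
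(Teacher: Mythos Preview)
The paper does not actually prove Theorem~\ref{thm:ville}; it is stated as Ville's classical result with citations to \cite{ville,sv1,sv2}, and the surrounding discussion merely explains how the strategy/positive-martingale dictionary (proved in Appendix~\ref{ap:mg}) translates the classical martingale statement into the thermodynamic language. So there is no in-paper proof to compare against.

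Your argument is nonetheless a correct and standard proof of Ville's theorem, and it meshes cleanly with the paper's framework. The ``if'' direction via Doob's convergence theorem is exactly the content of Appendix~\ref{ap:pv2l}. For the ``only if'' direction, your construction---outer-regular open covers $U_k\supseteq N$ of measure $\le 2^{-k}$, the Doob martingales $M^{(k)}_n=g_{\beta H}^{\mathbb{N}_+}(U_k\mid\mathcal{F}_n)/g_{\beta H}^{\mathbb{N}_+}(U_k)$, and the convex combination $M=\tfrac12+\tfrac12\sum_k 2^{-k}M^{(k)}$---is the classical one (cf.\ \cite{sv1,sv2}). The two points you flag as delicate are handled correctly: (i) openness of $U_k$ guarantees that membership is witnessed by a finite prefix, so $M^{(k)}_n(\xi)$ genuinely hits $1/g_{\beta H}^{\mathbb{N}_+}(U_k)\ge 2^k$ for \emph{every} $\xi\in U_k$, not just almost every one; and (ii) the additive $\tfrac12$ ensures $M>0$ on all of $\Omega^*$, which is exactly what is needed to invoke the bijection of Appendix~\ref{ap:mg} and obtain a bona fide strategy $\hat H$. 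The exchange of conditional expectation with the infinite sum is justified by conditional monotone convergence since all summands are nonnegative, and the edge case $N=\varnothing$ is trivial as you note. In short: your proof is sound and is essentially the proof one finds in the cited references.
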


Ville's theorem clarifies the reason why the convergence to the Gibbs distribution occurs under the second law in spite that no probability measure comes into our setting. Let us consider the experiment that we sample initial microscopic states infinitely many times independently and identically according to the Gibbs distribution $g_{\beta H}$. We take a statistical property $E \subseteq \Omega^{\mathbb{N}_{+}}$ with probability one in the experiment such as the SLLN (\ref{eq:gslln}). According to Ville's theorem, there exists a strategy $\hat{H}$ such that any sequences violate the second law under the strategy if they do not have the property $E$. In other words, whether a statistical property is consistent with equilibrium statistical mechanics or not is characterized in terms of the violation of the second law of thermodynamics. In addition, our definition of the second law (Definition \ref{def:2l}) needs no probability measure. This is an essential structure that makes it possible to argue the emergence of equilibrium statistical mechanics from the second law without referring to any probability measures.

\section{Analysis of Parametric Hamiltonian}
\label{sec:ph}

In the setting in Section \ref{sec:gh}, the agent is assumed to have an ability to prepare any complicated Hamiltonian containing non-local and many-body interactions. As a more physical situation, it is natural to restrict the possible operations into changing a small number of parameters in a certain class of Hamiltonians. In this section, we investigate such situation.

\subsection{Preliminaries: exponential family}

We focus on the Hamiltonians on $\Omega$ with a certain number of externally controllable parameters. We assume that the Hamiltonians have the form $\beta H_{\theta} (\omega) \coloneqq  - \theta \cdot  \phi (\omega) - h(\omega)$, where the parameter $\theta$ corresponds to the control parameters taking values in the parameter space $\Theta \subseteq \mathbb{R}^k$, $\phi : \Omega \to \mathbb{R}^k$ is the conjugate variable with respect to $\theta$, the dot `` $\cdot$ '' denotes the usual inner product in $\mathbb{R}^k$ and $h : \Omega \to \mathbb{R}$ is the remaining part of the Hamiltonian. $\Theta$ is assumed to be an open and convex subset of $\mathbb{R}^k$. Moreover, we suppose that the representation of the Hamiltonians is minimal, i.e., both $\{ \theta_i : i = 1, \dots, k \}$ and $\{ \phi_i : i = 1, \dots, k \}$ are affinely independent.

The Gibbs distribution for these Hamiltonians forms a minimal canonical \textit{exponential family} \cite{bn} $\mathcal{P}(\Theta, \phi, h)$, which is a family of distributions $P_{\theta}$ on $\Omega$ with densities $p_{\theta} (\omega) \coloneqq g_{\beta H_{\theta}} (\omega) = e^{\theta \cdot \phi (\omega) + h(\omega) - \psi (\theta)}$. The function $\psi (\theta) \coloneqq \ln \sum_{\omega \in \Omega} e^{\theta \cdot  \phi (\omega) + h(\omega)}$ is the Massieu function, which is related to the free energy $F_{\beta}(H_{\theta})$ through $F_{\beta} (H_{\theta}) = - \beta^{-1} \psi (\theta)$. The function $\psi (\theta)$ is differentiable infinitely many times and strictly convex on $\Theta$. The expectation value and covariance matrix of $\phi$ are respectively given by $\mathbb{E}_{\theta}(\phi) = \nabla_{\theta} \psi (\theta) \eqqcolon \mu (\theta)$ and $\mathrm{Cov}_{\theta}(\phi) = \nabla_{\theta} \nabla_{\theta} \psi (\theta) = \mathbb{E}_{\theta} [ - \nabla_{\theta} \nabla_{\theta} \ln P_{\theta}] \eqqcolon I(\theta)$. $I(\theta)$ is the Fisher information at $\theta$ for the family $\mathcal{P}(\Theta, \phi, h)$ and positive definite over $\Theta$. In the language of statistical physics, $I(\theta)$ is the isothermal susceptibility matrix and the relation $\mathrm{Cov}_{\theta} (\phi) = I(\theta)$ gives the fluctuation-response relation for a static isothermal response. The strict convexity of $\psi$ implies that the map $\theta \mapsto \mu (\theta)$ is invertible and that the elements of $\mathcal{P}(\Theta, \phi, h)$ can be reparametrized by the mapping $\mu \mapsto P_{\theta}$. $\theta (\mu)$ denotes the inverse and $\Xi \coloneqq \mu (\Theta)$ the conjugate parameter space. We note that $\nabla_{\mu} \theta (\mu) = I(\mu)$ gives the Fisher information in $\mu$-parametrization and is equals to $\mathrm{Cov}_{\mu}^{-1}(\phi)$. The Kullback-Leibler divergence between $P_{\theta}$ and $P_{\theta^{\prime}}$ is given by
\begin{align} \label{eq:kl}
 D (P_{\theta} \| P_{\theta^{\prime}}) = (\theta - \theta^{\prime}) \cdot  \mu (\theta) - \psi (\theta) + \psi (\theta^{\prime}).
\end{align}

The class of Hamiltonians having the above form describes a variety class of models that appears in statistical physics. We give two examples belonging to the above class.

\begin{exmp}[One-dimensional Ising model] \label{ex:ising}
The first example is a one-dimensional Ising model. Let $N$ be a positive integer. The state space of $N$ Ising spins on one-dimensional chain is $\Omega = \{ -1, 1 \}^N$ and a state is described by $\omega = (\sigma_1, \dots, \sigma_N)$, where $\sigma_i \in \{-1, 1\}$ denotes the state of spin at the site $i \in \{ 1, \dots, N \}$. The Ising model under a homogeneous magnetic field is described by the Hamiltonian $\beta H_{\theta} = - \beta h_{\mathrm{ex}} \sum_{i=1}^{N} \sigma_i - \beta J \sum_{i=1}^{N-1} \sigma_i \sigma_{i+1}$, where $h_{\mathrm{ex}}$ is the external field and $J$ is the coupling constant. If the agent changes only the magnetic field uniformly, the parameter is chosen as $\theta = \beta h_{\mathrm{ex}} \in \Theta \coloneqq \mathbb{R}$, the conjugate variable as the total magnetization $\phi (\omega) = \sum_{i=1}^{N} \sigma_i$, and the remaining part as $h(\omega) = \beta J \sum_{i=1}^{N-1} \sigma_i \sigma_{i+1}$. In this case, the expectation of $\phi$ is the average magnetization $\mu (\theta) = \sum_{i=1}^{n} P_{\theta} (\sigma_i)$ and the Fisher information is $I(\theta) = \beta^{-1} \chi$, where $\chi$ is the magnetic susceptibility. If the agent changes the coupling constant $J$ in addition to the magnetic field, the parameter becomes two-dimensional vector $\theta = (\beta h_{\mathrm{ex}}, \beta J)$, the conjugate variable $\phi = (\sum_{i=1}^{n} \sigma_i, \sum_{i=1}^{n-1} \sigma_i \sigma_{i+1})$ and the remaining part $h = 0$. The parametrization for a given physical model is thus not unique in general.
\end{exmp}

\begin{exmp}[Positive distribution] \label{ex:pos}
The second example is positive distributions on a general finite state space. This case corresponds to that we argued in Section \ref{sec:gh}. Let $\Omega = \{0, 1,2,\dots, k\}$ be a finite set. Putting $\Theta = \{ ( \ln (p_i / p_0))_{i=1}^{k} \in \mathbb{R}^{k} : 0 < p_i < 1, \sum_{j=1}^{k} p_j < 1, p_0 = 1 - \sum_{i=1}^{k} p_i \}$, $\theta_i = \ln ( p_i / p_0)$, $\phi_i = 1_{\{ i \}}$, $h = 0$, $\psi (\theta) = - \ln (1 - \sum_{i=1}^{k} p_i) = - \ln p_0$, we have that
\begin{align}
 p_{\theta}(i) = \exp \left\{ \sum_{j=1}^{k} \theta_j \phi_j (i) - \psi (\theta) \right\} = p_i
\end{align}
for every $i \in \Omega$. The exponential family is therefore the family of strictly positive distributions on $\Omega$. The empirical mean of $\phi$ is identified with the empirical density and the expectation $\mu_i (\theta)$ is given by $p_i$ for $i = 1,\dots, k$.
\end{exmp}

\subsection{Setup and Result}

The protocol we study in this subsection is almost the same as that in Section \ref{sec:gh}. A crucial difference is that the agent cannot prepare an arbitrary Hamiltonian in general and he has to construct a strategy by tuning only a small number of control parameters. 

We assume that the initial Hamiltonian has the form $\beta H_{\theta} = - \theta \cdot \phi - h$ for some $\theta \in \Theta$. A strategy the agent applies is characterized by a function $\hat{\theta} : \Omega^* \to \Theta$. The operation in the $n$-th cycle is performed according to the following protocol:
\begin{itemize}
\setlength{\leftskip}{3.0mm}
 \item[(P1)] The agent quenches the Hamiltonian adiabatically from the initial Hamiltonian $H_{\theta}$ to another one $H_{\hat{\theta}(\omega^{n-1})}$ when the initial states up to the $(n-1)$-th cycle are $\omega^{n-1} = \omega_1 \dots \omega_{n-1}$.
 \item[(P2)] The agent equilibrates the system with the inverse temperature $\beta$.
 \item[(P3)] The agent resets the system quasi-statically and isothermally.
\end{itemize}

The accumulation of extracted work $W^{\hat{\theta}} \coloneqq W^{H_{\hat{\theta}}}$ in this protocol under the strategy $\hat{\theta}$ is given by
\begin{align} \label{eq:pew}
 \beta W^{\hat{\theta}}(\omega^n) = \sum_{i=1}^{n} & \left[ (\hat{\theta}(\omega^{i-1}) - \theta) \cdot  \phi (\omega_i) - \psi (\hat{\theta}(\omega^{i-1})) + \psi (\theta) \right].
\end{align}
Let $E \subseteq \Omega^{\mathbb{N}_{+}}$ be an almost sure property under the infinite product of the Gibbs distributions $p_{\theta}^{\mathbb{N}_{+}} = g_{\beta H_{\theta}}^{\mathbb{N}_{+}}$. Although there exists a strategy $\hat{H}$ that forces $E$ according to Ville's theorem (Theorem \ref{thm:ville}), the strategy $\hat{H}$ may not be realizable within the above protocol. In general, the decrease in the number of possible strategies the agent can apply leads to the decrease in the variety of almost sure properties forced by the second law. The question we study in this section is what statistical properties are forced when we restrict the ability of the agent. 

We expect from the expression (\ref{eq:pew}) that such properties are restricted to the statistics of the conjugate variable $\phi$ with respect to the control parameter $\theta$. Indeed, we find that there exists a strategy in the restricted protocol that forces the strong law of large numbers for the conjugate variable.

\begin{thm} \label{thm:p}
Let $\overline{\phi}_n$ be the empirical mean of $\phi$ defined by
\begin{align}
 \overline{\phi}_n (\omega^n) \coloneqq \frac{1}{n} \sum_{i=1}^{n} \phi (\omega_i).
\end{align}
There exists a strategy $\hat{\theta}: \Omega^* \to \Theta$ that forces
\begin{align} \label{eq:pc}
 \lim_{n \to \infty} \overline{\phi}_n  = \mu (\theta).
\end{align}
\end{thm}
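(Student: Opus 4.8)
The plan is to exhibit a single ``universal'' prediction strategy out of the maximum–likelihood predictor for the exponential family $\mathcal{P}(\Theta,\phi,h)$ and to read the growth of the work (\ref{eq:pew}) off a regret estimate, in the spirit of the reduction behind Theorem \ref{thm:ville}, but with the extra demand that the forcing strategy be realizable within the $\Theta$-parametrized protocol. Note first that the event $\{\lim_n\overline{\phi}_n=\mu(\theta)\}$ has probability one under $p_\theta^{\mathbb{N}_+}$ by the classical strong law for i.i.d.\ bounded variables, so by Ville's theorem \emph{some} strategy forces it; the content is admissibility within (P1)--(P3) of Section \ref{sec:ph}. Fix $a>0$ and set $\hat{\mu}_n(\omega^n)\coloneqq\frac{n\,\overline{\phi}_n(\omega^n)+a\,\mu(\theta)}{n+a}$. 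Because $\overline{\phi}_n$ lies in the convex hull $\overline{\Xi}=\mathrm{conv}\,\phi(\Omega)$ and $\mu(\theta)$ is an interior point of the open convex set $\Xi$, the point $\hat{\mu}_n$ belongs to $\Xi$ for every history and every $n$; hence $\hat{\theta}(\omega^n)\coloneqq\theta(\hat{\mu}_n(\omega^n))\in\Theta$ defines an admissible strategy. This is the regularized maximum–likelihood (``add one fictitious observation at the equilibrium value'') predictor, the regularization serving only to keep $\hat{\theta}$ inside $\Theta$ for small $n$ and near $\partial\Xi$.

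The heart of the argument is the identity obtained from (\ref{eq:llr})--(\ref{eq:qhat}), with $\hat{q}(\omega^n)=\prod_{i=1}^{n}p_{\hat\theta(\omega^{i-1})}(\omega_i)$, by inserting the hindsight distribution $p_{\theta(\hat\mu_n)}$ and using (\ref{eq:kl}):
\begin{align*}
 \beta W^{\hat\theta}(\omega^n)=\ln\frac{\hat q(\omega^n)}{p_\theta^{n}(\omega^n)}=n\,D\!\left(P_{\theta(\hat\mu_n)}\,\big\|\,P_\theta\right)-R_n(\omega^n)+\varepsilon_n(\omega^n),
\end{align*}
where $R_n\coloneqq\ln\!\big(p_{\theta(\hat\mu_n)}^{n}(\omega^n)/\hat q(\omega^n)\big)\ge 0$ is the sequential regret of the plug-in predictor against the best fixed member of $\mathcal{P}(\Theta,\phi,h)$ in hindsight, and $\varepsilon_n$ is a lower-order correction coming from $\hat\mu_n-\overline\phi_n=O(1/n)$ (bounded while $\hat\mu_n$ stays in a compact part of $\Xi$, and dominated by the divergence term when $\hat\mu_n\to\partial\Xi$). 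The map $\mu\mapsto D(P_{\theta(\mu)}\|P_\theta)$ is continuous and nonnegative on the compact set $\overline\Xi$ and vanishes only at $\mu=\mu(\theta)$. The exponential-family structure enters through the bound $R_n\le\frac{k}{2}\ln n+C$, uniform in $\omega^n$: I would prove it by writing $R_n=\sum_{i=1}^{n}\big[\ln p_{\theta(\hat\mu_n)}(\omega_i)-\ln p_{\theta(\hat\mu_{i-1})}(\omega_i)\big]$, telescoping, and Taylor-expanding the Massieu function $\psi$ to second order, its Hessian being the positive-definite Fisher information $I(\theta)$, so that the quadratic remainders sum to $\frac{k}{2}\ln n+O(1)$ (alternatively one invokes the minimax-regret estimates for exponential families, e.g.\ \cite{grunwald}).

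Granting this, if $\xi=\omega_1\omega_2\dots\notin\{\lim_n\overline\phi_n=\mu(\theta)\}$ then $\hat\mu_n\not\to\mu(\theta)$, so $\limsup_n D(P_{\theta(\hat\mu_n)}\|P_\theta)=c>0$, and along a subsequence $n_j$ one gets $\beta W^{\hat\theta}(\omega^{n_j})\ge c\,n_j-\frac{k}{2}\ln n_j-C'\to\infty$; hence the strategy at least \emph{weakly} forces (\ref{eq:pc}). To turn ``$\sup_n=\infty$'' into ``$\lim_n=\infty$'' I would use the standard lock-in device: run a null sequence of scaled copies of the strategy with weights $2^{-j}$ and freeze the $j$-th copy --- by switching its parameter to $\theta$, which costs no work and keeps every move in $\Theta$ --- as soon as it has multiplied its stake by $2^{j}$; by the weak-forcing property each copy eventually freezes at value $\ge 1$ on $\xi\notin\{\lim\overline\phi_n=\mu(\theta)\}$, so the aggregate capital tends to $\infty$ there. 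For the special case $\mathcal{P}=$ all strictly positive distributions on $\Omega$ (Example \ref{ex:pos}), the aggregate strategy is again of the required form, since a convex combination of strictly positive densities is one; with $\phi_i=1_{\{i\}}$, so that $\overline\phi_n$ is the empirical distribution and $\mu(\theta)=g_{\beta H}$, this recovers Theorem \ref{thm:np}.

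I expect the difficulty to sit in two places. The quantitative one is the uniform logarithmic regret bound, where the exponential-family/Fisher-information geometry is genuinely used and where keeping the constant uniform forces the regularization to be chosen with some care near $\partial\Xi$. The conceptually delicate one is the passage from weak forcing to forcing \emph{while remaining inside the $\Theta$-parametrized protocol}: the Bayesian-mixture strategy that makes this transparent in the unconstrained setting of Theorem \ref{thm:ville} is not itself the Gibbs state of any Hamiltonian $H_{\theta'}$ unless $\mathcal{P}(\Theta,\phi,h)$ is full-dimensional, so the locking has to be engineered so that each move it prescribes is still a bona fide quench within the family.
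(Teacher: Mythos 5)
Your weak-forcing half is essentially the paper's argument: a maximum-likelihood-type plug-in strategy, the decomposition of $\beta W^{\hat\theta}(\omega^n)$ into a hindsight term $n\,D(P_{\hat\theta(\omega^n)}\|P_\theta)$ plus a sequential regret, and a uniform $O(\ln n)$ regret bound (the paper imports this from Kot\l owski--Gr\"unwald, Lemma \ref{lem:kg}, rather than reproving it by Taylor expansion). But there is a genuine gap exactly where you flag a ``conceptually delicate'' point and then leave it unresolved: the upgrade from weak forcing to forcing. Your lock-in device aggregates \emph{capital}, i.e.\ mixes the predictive densities $\sum_j 2^{-j} p_{\hat\theta_j(\cdot)}$, and as you yourself observe such a mixture is not of the form $p_{\theta'}$ for any $\theta'\in\Theta$ unless the family is the full simplex of Example \ref{ex:pos}. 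Saying ``the locking has to be engineered so that each move is a bona fide quench'' names the problem without solving it, and for the general parametric protocol of Section \ref{sec:ph} --- which is the actual content of Theorem \ref{thm:p} --- the proof is therefore incomplete. The missing idea is the paper's Lemma \ref{lem:wff}: take the convex combination of the stopped strategies in \emph{parameter space}, $\hat\theta^*(\omega^n)=\sum_i 2^{-i}\hat\theta^{(2^i)}(\omega^n)\in\Theta_0$ (legitimate because $\Theta_0$ is compact and convex), and observe from (\ref{eq:pew}) that the work is affine in $\hat\theta$ except for the $-\psi(\hat\theta)$ term, so convexity of the Massieu function and Jensen's inequality give $W^{\hat\theta^*}(\omega^n)\ge\sum_i 2^{-i}W^{\hat\theta^{(2^i)}}(\omega^n)$, which then diverges on the complement of the target event.

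A secondary problem sits in your regularized estimator. You need $\hat\mu_n=\frac{n\overline\phi_n+a\mu(\theta)}{n+a}\in\Xi$, but $\Theta$ is only assumed to be an open convex subset of $\mathbb{R}^k$, so $\Xi=\mu(\Theta)$ need not be convex nor equal to $\mathrm{int}\,\mathrm{conv}\,\phi(\Omega)$; your interiority argument only works when $\Theta=\mathbb{R}^k$. Even then, since the weight $a/(n+a)$ on the interior point vanishes, $\hat\mu_n$ can approach $\partial\Xi$ (e.g.\ when all $\omega_i$ coincide), $\theta(\hat\mu_n)$ diverges, and the uniform Fisher-information constant in your telescoping regret bound fails. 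The paper avoids both issues by clamping the estimator to a fixed compact convex $\Theta_0\ni\theta$ and using the variational inequality $(\theta'-\hat\theta_{\mathrm{ML}})\cdot(\hat\mu_{\mathrm{ML}}-\overline\phi_n)\ge 0$ to control the boundary case; this suffices because one only needs to detect a failure of $\overline\phi_n\to\mu(\theta)$, not to track $\overline\phi_n$ over all of $\Xi$. With these two repairs --- the compact clamp and the parameter-space Jensen aggregation --- your argument becomes the paper's proof.
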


This is the second main result of the present paper. Theorem \ref{thm:np} is a special case of Theorem \ref{thm:p} because the empirical mean of $\phi$ in Example \ref{ex:pos} can be identified with the empirical density.

Since $\mu (\theta)$ is the equilibrium value of the conjugate variable, Theorem \ref{thm:p} implies that the infinite sequences satisfying the second law are indistinguishable from random sequences sampled from the equilibrium distribution as long as we observe the conjugate variable. In contrast to Theorem \ref{thm:np}, the relative frequencies of initial microscopic states may not converge to the Gibbs distribution under the strategy. For instance, when the agent controls the homogeneous magnetic field on the free spin system, the average magnetization should converge to the equilibrium value but the relative frequencies of spin configurations having the same magnetization may not be controlled. Thus the statistical properties the agent can force under the second law depend on what kind of physical operations are allowed for the agent.

\subsection{Sketch of proof of Theorem \ref{thm:p}} \label{subsec:proof}

We sketch a proof of Theorem \ref{thm:p}. We present the details of the proof in Appendix \ref{ap:proof}. 

By the following lemma, it is sufficient to prove the existence of a strategy that weakly forces (\ref{eq:pc}) and takes values in a compact set containing the initial parameter $\theta$ as an interior point. The proof is presented in Appendix \ref{ap:proof}.

\begin{lem} \label{lem:wff}
Let $E \subseteq \Omega^{\mathbb{N}}$ be an event and $\Theta_0 \subset \Theta$ a compact convex subset with $\theta \in \mathrm{int} \Theta_0$. If there exists a strategy that weakly forces $E$ and takes values in $\Theta_0$, then there is also a strategy that forces $E$ and takes values in the same set $\Theta_0$.
\end{lem}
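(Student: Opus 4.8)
The plan is to reduce the claim to producing a single $\Theta_0$-valued strategy whose accumulated work genuinely diverges on the complement of $E$, and then to carry out that construction by a ``geometric-mixing'' device that replaces the ``banking'' trick available in the unrestricted protocol of Section~\ref{sec:gh}. Two facts will be used throughout. First, since $\Theta_0$ is compact and $\Omega$ finite, $\psi$ is continuous on $\Theta$, and the densities $p_{\theta'}$ with $\theta'\in\Theta_0$ are uniformly bounded away from $0$, there is a constant $C=C(\Theta_0,\theta,\phi,h)$ such that every one-step increment $(\hat\theta'(\omega^{i-1})-\theta)\cdot\phi(\omega_i)-\psi(\hat\theta'(\omega^{i-1}))+\psi(\theta)$ of $\beta W^{\hat\theta'}$ lies in $[-C,C]$, for \emph{any} $\Theta_0$-valued strategy $\hat\theta'$. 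Second, ``$\hat\theta$ weakly forces $E$'' means precisely that $\sup_n W^{\hat\theta}(\omega^n)=\infty$ whenever $\xi=\omega_1\omega_2\dots\notin E$.

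The substitute for banking is the following. In Theorem~\ref{thm:np} one upgrades weak forcing to forcing by setting aside a fixed amount of capital each time $e^{\beta W^{\hat\theta}}$ doubles; in terms of forecasts this is an arithmetic mixture with the passive density $g_{\beta H_{\theta}}$, which is again a density but in general leaves the exponential family $\mathcal{P}(\Theta,\phi,h)$ --- it does not in Example~\ref{ex:pos}, where that family is the full simplex, which is exactly why Theorem~\ref{thm:np} admits a direct proof. Here I use convexity of $\Theta_0$ instead: given probability weights $(w_k)_{k\ge1}$ and $\Theta_0$-valued strategies $(\hat\theta^{(k)})_{k\ge1}$, the strategy $\hat\theta'(\omega^{n-1}):=\sum_{k\ge1}w_k\,\hat\theta^{(k)}(\omega^{n-1})$ is again $\Theta_0$-valued, and, applying the convexity of the Massieu function $\psi$ increment by increment, its work dominates the weighted average of the component works:
\[ \beta W^{\hat\theta'}(\omega^n)\ \ge\ \sum_{k\ge1}w_k\,\beta W^{\hat\theta^{(k)}}(\omega^n)\qquad\text{for all }n\text{ and all }\omega^n\in\Omega^n. \]
(The underlying exponential-family fact is that a normalised \emph{geometric} mixture of members of $\mathcal{P}(\Theta,\phi,h)$ is itself a member, with parameter $\sum_k w_k\theta_k$.) So it suffices to exhibit $\Theta_0$-valued $\hat\theta^{(k)}$ and weights $w_k$ for which $\sum_{k\ge1}w_k\,\beta W^{\hat\theta^{(k)}}(\omega^n)\to+\infty$ on the complement of $E$.

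For this I would take $w_k=2^{-k}$ and let $\hat\theta^{(k)}$ run $\hat\theta$ until $\beta W$ first reaches the level $L_k:=2^k$ and then switch permanently to the passive parameter $\theta$, so that $\beta W^{\hat\theta^{(k)}}$ freezes at some value $c_k\in[2^k,2^k+C]$. For $\xi\notin E$ weak forcing makes the corresponding hitting times $T_k$ finite with $T_k\uparrow\infty$, hence $\beta W^{\hat\theta^{(k)}}(\omega^n)\to c_k$ as $n\to\infty$, and, writing $\kappa_n:=\#\{k:T_k\le n\}$ (so $\kappa_n\to\infty$),
\[ \sum_{k\ge1}w_k\,\beta W^{\hat\theta^{(k)}}(\omega^n)\ =\ \sum_{k\le\kappa_n}2^{-k}c_k\ +\ 2^{-\kappa_n}\,\beta W^{\hat\theta}(\omega^n), \]
where the first sum is $\ge\kappa_n\to\infty$ and the second term is at most $2$. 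The step I expect to be the main obstacle is to rule out that the second term drags the total down to $-\infty$: a priori the work of the given weakly forcing $\hat\theta$ can make arbitrarily deep downward excursions between the times $T_k$, so $2^{-\kappa_n}\beta W^{\hat\theta}(\omega^n)$ is not obviously bounded below. I would attack this by first replacing $\hat\theta$ with a \emph{cautiously rescaled} version $\hat\theta''(\omega^{n-1}):=(1-\lambda_n)\theta+\lambda_n\hat\theta(\omega^{n-1})$, still $\Theta_0$-valued because $\theta\in\Theta_0$ and $\Theta_0$ is convex, where $\lambda_n\in(0,1]$ is chosen as large as possible subject to keeping the worst-case next value of $\beta W$ above a floor $-a_n$ with $a_n\uparrow a_\infty<\infty$; the step inequality $\beta\bigl[W^{\hat\theta''}(\omega^{n})-W^{\hat\theta''}(\omega^{n-1})\bigr]\ge\lambda_n\,\beta\bigl[W^{\hat\theta}(\omega^{n})-W^{\hat\theta}(\omega^{n-1})\bigr]$ (again from convexity of $\psi$), together with the uniform bounds above, forces $\lambda_n=1$ whenever the current work exceeds $C$, which one would try to leverage to show that $\hat\theta''$ still weakly forces $E$ while its work stays $\ge-a_\infty$; feeding $\hat\theta''$ into the freeze-and-mix construction then makes the second term $\ge-a_\infty\,2^{-\kappa_n}\to0$, so the total diverges. (An alternative worth pursuing is to let the mixing weights depend on $n$ in the manner of Vovk's aggregating algorithm for the logarithmic loss, which automatically discounts lagging components.) Since freezing at $\theta$, cautious rescaling, and convex mixing all preserve $\Theta_0$-valuedness, and only weak forcing of $E$ along each individual sequence has been used, the strategy so obtained forces $E$ and takes values in $\Theta_0$, which is Lemma~\ref{lem:wff}.
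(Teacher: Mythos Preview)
Your freeze-and-mix construction is precisely the paper's: stop $\hat\theta$ when $\beta W^{\hat\theta}$ first reaches $2^k$, set $\hat\theta^*(\omega^{n-1})=\sum_{k\ge1}2^{-k}\hat\theta^{(2^k)}(\omega^{n-1})\in\Theta_0$ by compactness and convexity, and use concavity of the one-step increment in the parameter (equivalently, convexity of $\psi$) to get $\beta W^{\hat\theta^*}\ge\sum_k 2^{-k}\beta W^{\hat\theta^{(2^k)}}$. The paper then writes, in one line, that since each $\beta W^{\hat\theta^{(2^k)}}(\omega^n)$ tends to some $c_k\ge 2^k$, the lower bound and hence $W^{\hat\theta^*}$ tend to $+\infty$; it does not isolate or discuss the tail term $2^{-\kappa_n}\beta W^{\hat\theta}(\omega^n)$ at all. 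So structurally your proposal and the paper's proof coincide; the difference is that you look more closely at the last step.

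Your worry about that tail term is legitimate. In the Shafer--Vovk lemma the paper is adapting, one averages \emph{capitals}, which are nonnegative, so the unstopped components contribute $\ge 0$ and the conclusion is immediate; here the averaging is of parameters, Jensen gives a bound on the \emph{log}-capital, and the unstopped part $2^{-\kappa_n}\beta W^{\hat\theta}(\omega^n)$ has no a priori floor. Between $T_k$ and $T_{k+1}$ nothing forbids $\beta W^{\hat\theta}(\omega^n)$ from dipping to, say, $-k\,2^{k}$ (bounded increments only make this take time), so the displayed lower bound can fail to tend to $+\infty$; at the crossing times it does, which only recovers weak forcing. Your cautious-rescaling repair $\hat\theta''=(1-\lambda_n)\theta+\lambda_n\hat\theta$ is the natural way to impose a finite floor while staying in $\Theta_0$, and if $\hat\theta''$ still weakly forces $E$ then feeding it into the freeze-and-mix genuinely finishes the proof. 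But that conditional is where the content lies, and you leave it at ``one would try to leverage to show''. The one-sided bound $\Delta\beta W^{\hat\theta''}\ge\lambda_n\,\Delta\beta W^{\hat\theta}$ damps positive increments of $\hat\theta$ just as much as negative ones when $\lambda_n$ is small, so there is no automatic reason the rescaled work ever climbs back above $C$ to restore $\lambda_n=1$; a sequence on which $\hat\theta$'s large gains always occur while $\beta W^{\hat\theta''}$ is pinned near the floor is not ruled out by what you have written. Until this is argued (or the aggregating-algorithm alternative you mention is carried through), the proof is incomplete at exactly the step the paper's own proof glosses over.
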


Our method of the proof is based on the maximum likelihood strategy. If the agent applies a constant strategy, i.e., $\hat{\theta} = \theta^{\prime}$, the accumulation of extracted work up to the $n$-th cycle is given by
\begin{align} \label{eq:const}
 \beta W^{\hat{\theta}}(\omega^n) = \ln \frac{p^n_{\theta^{\prime}}(\omega^n)}{p^n_{\theta} (\omega^n)}.
\end{align}
The maximum likelihood strategy is defined as the maximizer of (\ref{eq:const}) for each $\omega^n$. Because the maximizer of (\ref{eq:const}) over the open set $\Theta$ may not exist, we restrict the range of the strategy to a compact set $\Theta_0$ containing the initial parameter $\theta$ as an interior point. The maximum likelihood estimator $\hat{\theta}_{\mathrm{ML}} : \Omega^* \to \Theta_0$ with respect to $\Theta_0$ is defined as
\begin{align} \label{eq:defml}
 \hat{\theta}_{\mathrm{ML}} (\omega^n) \coloneqq  \arg \max_{\theta^{\prime} \in \Theta_0} \ln p_{\theta^{\prime}}^{n}(\omega^n).
\end{align}
We define the corresponding conjugate parameter space $\Xi_0 \coloneqq \mu (\Theta_0)$, which is also a compact convex subset of $\Xi$. We set $\hat{\mu}_{\mathrm{ML}}(\omega^n) = \mu (\hat{\theta}_{\mathrm{ML}}(\omega^n))$. Because $\nabla_{\theta} \ln p_{\theta}(\omega) = \phi (\omega) - \mu (\theta)$, the maximum likelihood estimator $\hat{\theta}_{\mathrm{ML}}$ and $\hat{\mu}_{\mathrm{ML}}$ are given as the solution of the equation
\begin{align} \label{eq:ml}
 \frac{1}{n} \sum_{i=1}^n \phi (\omega_i) = P_{\hat{\theta}_{\mathrm{ML}}(\omega^n)}(\phi) = \hat{\mu}_{\mathrm{ML}}(\omega^n)
\end{align}
if and only if $\overline{\phi}_n \in \Xi_0$, which does not always hold. 

The maximum likelihood strategy was originally studied in the context of universal coding and sequential prediction. Kot\l owski and Gr\"unwald obtained the following lower bound in \cite{kg}:
\begin{align}
&\sum_{i=1}^{n} \left[ \ln p_{\hat{\theta}_{\mathrm{ML}}(\omega^{i-1})}(\omega_i) - \ln p_{\hat{\theta}_{\mathrm{ML}}(\omega^n)} (\omega_i) \right] \geq  - \frac{I_{\Xi_0} (B + C_{\Xi_0})^2}{2} \ln n + O(1),
\end{align}
where $B \coloneqq \max_{\omega \in \Omega} | \phi (\omega) |$, $C_{\Xi_0} \coloneqq \max_{\mu \in \Xi_0} \| \mu \|$ and $I_{\Xi_0} \coloneqq \max_{\mu \in \Xi_0} \| I(\mu) \|$. We note that these constants are all finite due to the compactness of $\Omega$ and $\Xi_0$. By using this result, we have a lower bound for the accumulation of the extracted work under the maximum likelihood strategy:
\begin{align} \label{eq:bound}
 \beta W^{\hat{\theta}_{\mathrm{ML}}} (\omega^n) &\geq n \left[ D (P_{\hat{\theta}_{\mathrm{ML}}(\omega^n)} \| P_{\theta})  - \frac{I_{\Xi_0} (B + C_{\Xi_0})^2}{2} \frac{\ln n}{n} \right] + O(1).
\end{align}
The details of the calculation is presented in Appendix \ref{ap:proof}. We remark that the lower bound (\ref{eq:bound}) is valid for any individual sequences $\omega^n$. The inequality (\ref{eq:bound}) implies that the maximum likelihood strategy forces weakly the convergence of $p_{\hat{\theta}_{\mathrm{ML}}} (\omega)$ to $p_{\theta} (\omega)$ for any $\omega \in \Omega$, and therefore it also forces weakly $\overline{\phi}_n \to \mu (\theta)$. This completes the proof.

We have several remarks before ending the subsection. 
\begin{itemize}
 \item[(1)] The lower bound (\ref{eq:bound}) has much information on the asymptotic behavior. If the second law is weakly valid, i.e., $\sup_n W^{\hat{\theta}_{\mathrm{ML}}} < \infty$, then $P_{\hat{\theta}_{\mathrm{ML}}}$ converges to $P_{\theta}$ with respect to the Kullback-Leibler distance with the convergence rate $O(\sqrt{ \ln n / n})$ and the convergence factor $\sqrt{ I_{\Xi_0} (B + C_{\Xi_0})^2 / 2}$. We also note that if the convergence (\ref{eq:pc}) does not hold, the accumulation of the extracted work grows at least linearly. Specifically, the extracted work \textit{per cycle} becomes positive infinitely many times,
\begin{align} \label{eq:slpr}
 \limsup_{n \to \infty} \frac{W^{\hat{\theta}}_n}{n}  > 0,
\end{align}
for sequences violating the strong law of large numbers (\ref{eq:pc}). For instance, if $\hat{\theta}_{\mathrm{ML}}(\omega^n) \to \theta^{\prime} ( \neq \theta)$ as $n \to \infty$, the rate of the extracted work is given by the Kullback-Leibler distance between $P_{\theta^{\prime}}$ and $P_{\theta}$. This implies that our results (Theorem \ref{thm:np} and \ref{thm:p}) are valid even if we apply (\ref{eq:slpr}) as a definition of the violation of the second law.
\item[(2)] The maximum likelihood estimator $\hat{\theta}_{\mathrm{ML}} (\omega_n)$ defined by (\ref{eq:defml}) depends on $\omega_n$ through only $\overline{\phi}_n$ because  the conjugate variable $\phi$ is the sufficient statistic for the exponential family $\mathcal{P} (\Theta, \phi, h)$. Therefore, even if only the information on the past history of $\phi$ is given to the agent, he can perform the maximum likelihood strategy and can forces the strong law of large numbers for the conjugate variable.
\end{itemize}

\subsection{Numerical Demonstration}

We illustrate the validity of the maximum likelihood strategy numerically for the Ising Hamiltonian under the homogeneous magnetic field for two spins, $\beta H_{\theta} (\sigma_1, \sigma_2) = - \sigma_1 \sigma_2 - \theta ( \sigma_1 + \sigma_2)$. The initial parameter is set to be $\theta = 0$. In Fig \ref{fig:work}, we plot the accumulations of extracted work for sequences generated from the Gibbs distribution with $\theta = 0, \pm 0.1, \pm 0.2$. The figure shows that while the accumulations for parameters $\theta = \pm 0.1, \pm 0.2$ diverge to infinity and the second law is violated, the accumulation for $\theta = 0$ remains finite. The linear growth of the accumulation of extracted work is consistent with the lower bound (\ref{eq:bound}).

\begin{figure}
\centering
\includegraphics[width=10cm]{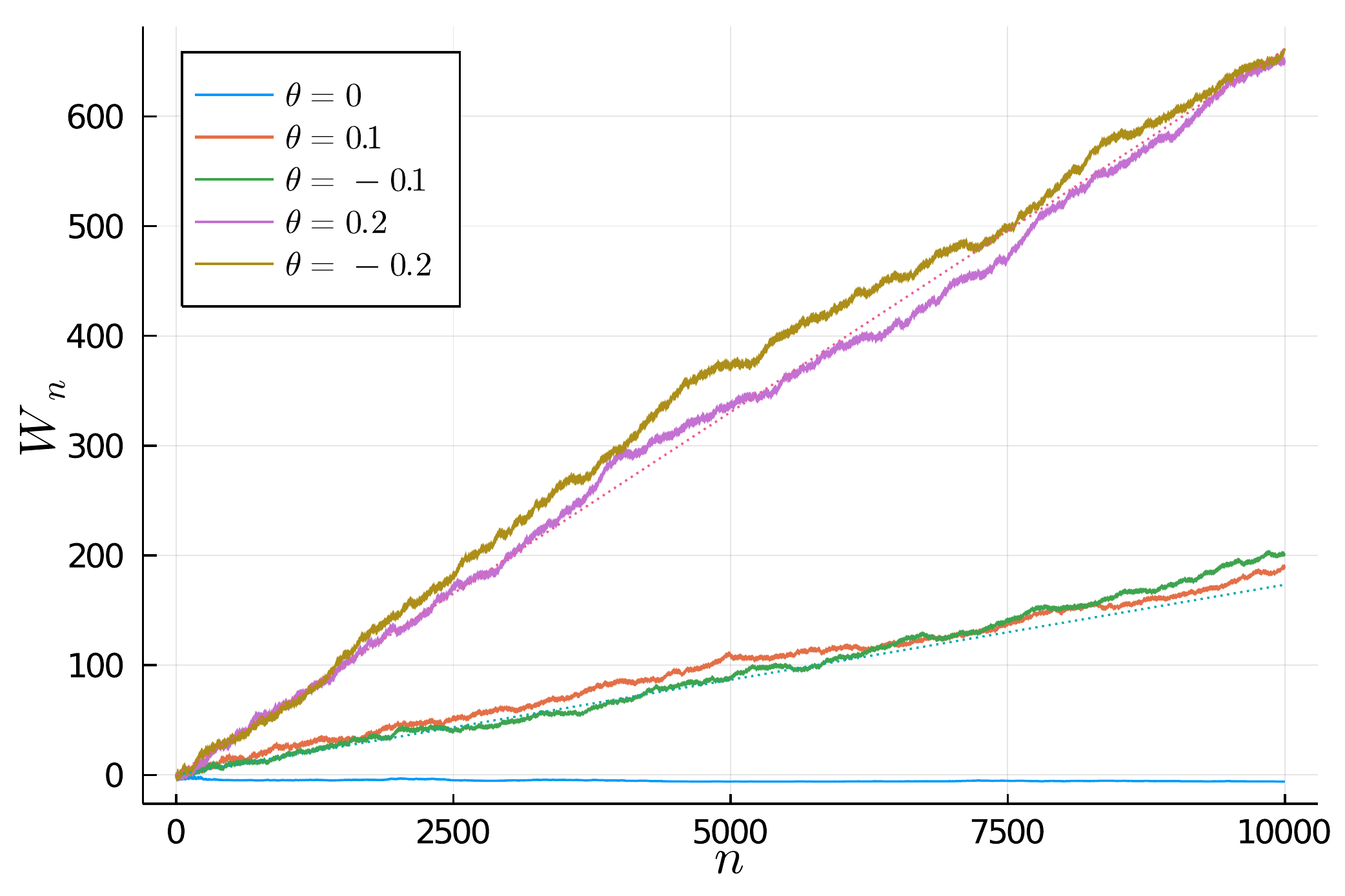}
\caption{The accumulations of the extracted work for sequences generated by the Gibbs distributions with various parameters versus the number of cycles. The top (resp. bottom) dotted line indicates the Kullback-Leibler distance between $P_{\theta = \pm 0.2}$ (resp. $P_{\theta = \pm 0.1}$) and $P_{\theta = 0}$.}
\label{fig:work}
\end{figure}

\section{Game-theoretic Interpretation}
\label{sec:gi}

We discuss a game-theoretic interpretation of the protocol of this paper. The interpretation is based on the analogy with game-theoretic probability theory \cite{sv1,sv2}. 

We interpret the work extraction as a game played between two players, $\Skeptic$ and $\Reality$. The player $\Skeptic$ doubts that the statistical property of the heat engine is described by equilibrium statistical mechanics, particularly the Gibbs distribution. He attempts to test the hypothesis of equilibrium statistical mechanics by actually extracting work from the engine many times and observing whether the second law of thermodynamics holds or not. If the accumulation of the extracted work diverges to infinity, i.e., the second law is violated, then $\Skeptic$ rejects the hypothesis. On the other hand, the player $\Reality$ decides initial microscopic states of the heat engine. The duty of $\Reality$ is to prevent $\Skeptic$ from extracting the infinite amount of work and to make the second law of thermodynamics valid in our world.

The algorithm of the game corresponding to the protocol we discussed in Section \ref{sec:gh} is as follows.

\begin{protocol}[Generic Hamiltonian] \label{pro:np}
\

\textbf{Parameter:}

 \ \ \ an initial Hamiltonian $H : \Omega \to \mathbb{R}$,

 \ \ \ an inverse temperature $\beta > 0$.

\textbf{Players:} $\Skeptic$, $\Reality$.

\textbf{Protocol:}

 \ \ \ $W (\square) = 0$.

 \ \ \ FOR $n=1,2, \dots$:

\vspace{-1.5mm}

\begin{itemize}
 \item[] \ \ \ \ \ \   $\Skeptic$ announces $H_n$
 \item[] \ \ \ \ \ \ $\Reality$ announces $\omega_n \in \Omega$
 \item[] \ \ \ \ \ \ $W_n \coloneqq W_{n-1} + H(\omega_n) - H_n(\omega_n) + F_{\beta}(H_n) - F_{\beta}(H)$
\end{itemize}

\vspace{-1.5mm}

 \ \ \ END FOR  
\end{protocol}

In Protocol \ref{pro:np}, $H_n \coloneqq \hat{H}(\cdot | \omega^{n-1})$ denotes the quenched Hamiltonian in the $n$-th cycle and $W_n =W^{\hat{H}}(\omega^n)$ the accumulation of extracted work up to the $n$-th cycle. We remark that $\Reality$ can decides an initial microscopic state $\omega_n$ \textit{after} the announcement by $\Skeptic$. Therefore, we allow $\Reality$ to move strategically in Protocol \ref{pro:np}. Nevertheless, Theorem \ref{thm:np} is still valid even if $\Reality$ decides microscopic states strategically because whether a strategy forces a property or not is independent of the strategy $\Reality$ applies. Theorem \ref{thm:np} implies that there exists a strategy for $\Skeptic$ such that $\Reality$ has to converge the empirical distribution to the Gibbs distribution. In other words, $\Reality$ is forced to act probabilistically in a manner consistent with equilibrium statistical mechanics due to the second law. 

As an application of Theorem \ref{thm:np}, we consider a testing procedure of random number generator. Let us suppose that $\Reality$ claims that she finds an algorithm to generate random numbers with respect to the Gibbs distribution. $\Skeptic$ doubts her claim and attempts to test it. Although there are several criteria that the random numbers should satisfy, we consider the strong law of large numbers (\ref{eq:gslln}) for the empirical distribution as a criterion here. If the random number obtained from $\Reality$'s generator does not satisfy the SLLN, $\Skeptic$ rejects her claim and concludes that the random number generator does not work well. According to Theorem \ref{thm:p}, $\Skeptic$ can confirm the validity of the SLLN by performing cyclic operations for microscopic states prepared by the generator. In other words, the second law of thermodynamics can be used as a test for the random number generators.

Similarly, the algorithm of the protocol in Section \ref{sec:ph} is as follows.

\begin{protocol}[Parametric Hamiltonian] \label{pro:p}
\

\textbf{Parameter:}

 \ \ \ a control parameter space $\Theta$,
 
 \ \ \ a conjugate variable $\phi : \Omega \to \mathbb{R}^k$,
 
 \ \ \ a remaining Hamiltonian $h : \Omega \to \mathbb{R}$,

 \ \ \ an initial parameter $\theta \in \Theta$, 

 \ \ \ an inverse temperature $\beta > 0$.

\textbf{Players:} $\Skeptic$, $\Reality$.

\textbf{Protocol:}

 \ \ \ $W (\square) = 0$.

 \ \ \ FOR $n=1,2, \dots$:

\vspace{-1.5mm}

\begin{itemize}
 \item[] \ \ \ \ \ \ $\Skeptic$ announces $\theta_n \in \Theta$.
 \item[] \ \ \ \ \ \ $\Reality$ announces $\omega_n \in \Omega$
 \item[] \ \ \ \ \ \ $W_n \coloneqq W_{n-1} + H_{\theta}(\omega_n) - H_{\theta_n}(\omega_n) + F_{\beta}(H_{\theta_n}) - F_{\beta}(H_{\theta})$
\end{itemize}

\vspace{-1.5mm}

END FOR  
\end{protocol}

Theorem \ref{thm:p} can be interpreted game-theoretically in a similar way to the case of Theorem \ref{thm:np}.

\section{Concluding Remarks}
\label{sec:cr}

In this paper, we provided a novel formulation of the second law of thermodynamics and showed that there exist strategies for the agent that force statistical properties consistent with equilibrium statistical mechanics, i.e., the strong law of large numbers for the empirical distribution and empirical mean. In the protocol where the agent is able to prepare arbitrary Hamiltonians, the empirical distribution must converge to the Gibbs distribution for the initial Hamiltonian under some strategy. In the protocol where the agent can change a small number of parameters in the Hamiltonian, the maximum likelihood strategy forces the strong law of large numbers for the conjugate variable. Before ending the paper, we discuss future directions of the study.

First, we considered the simple settings where initial microscopic states are prepared independently and identically. It is important to study in the same spirit more complicated settings such as stochastic thermodynamics \cite{seifert,sekimoto}, information thermodynamics \cite{su}, and quantum systems \cite{pw,lenard}. In particular, it is a future subject to treat the stochastic evolution of microscopic state in our framework and to connect it with thermodynamic quantities. The extension to the quantum settings might be more difficult  because the quantum theory has a probabilistic structure different from the classical one. See the chapter 10.6 in \cite{sv2} for game-theoretic formulations for Born's rule and quantum computation.

Secondly, we assume in our settings that the work extracted in the quasi-static isothermal process is equal to the decrease in free energy defined by $F_{\beta}(H) = - \beta^{-1} \ln \sum_{\omega \in \Omega} e^{- \beta H(\omega)}$. This assumption allows us to connect the work extracted in the quasi-static isothermal process with the microscopic Hamiltonian, without explicitly referring to the time evolution of the system. However, we should remove it to clarify the emergence of the probabilistic structure from thermodynamics. Although one way to do so is to replace the difference in free energy, $F_{\beta}(\hat{H}) - F_{\beta}(H)$, in Protocol \ref{pro:np} by the work in the quasi-static process defined only in terms of mechanics, it is a challenging task because we have to take the dynamical properties of the system into account explicitly.

Thirdly, we investigated the asymptotic behavior of the work extracted from a \textit{finite} system as the number of cycles $n$ goes to infinity. While our results are valid for small systems as with stochastic thermodynamics, the second law of thermodynamics is believed to hold almost surely for \textit{macroscopic} systems without repeating the operations. Developing our formulation for macroscopic thermodynamics is a future direction of the study.

Fourthly, we assumed in our protocol that the agent can use the information about microscopic states in the past. However, this assumption may be too demanding when the size of the system is large. In a more realistic situation, the agent can measure only the value of extracted work. It is important to study statistical properties forced by strategies under such situation.

Fifthly, the approach based on the prequential form allows us to classify the statistical property of the equilibrium state according to the ability of the agent. In the usual formulation of statistical mechanics, the law of large numbers for the empirical distribution and for the empirical mean of the conjugate variable are both formulated as the events with probability one. In this sense, there is no distinction between these probabilistic laws. However, in the prequential approach, what kind of probabilistic laws can be observed depends on our ability. It is interesting to provide a detailed classification of the probabilistic laws of equilibrium states according to the ability of the agent to operate the system.

Finally, we mention the probabilistic feature of infinite sequences. Although the empirical distribution for the simple binary sequence $01010101 \dots$ converges to the uniform distribution on $\{0, 1\}$, it is not sufficiently random according to our intuition. However, Theorems \ref{thm:np} and \ref{thm:p} refer to only the empirical distribution and mean. To say that an infinite sequence is random with respect to a probability distribution, we have to require stronger conditions on the empirical statistics. The algorithmic theory of randomness provides an idealized notion of randomness on the basis of computability theory and martingale theory \cite{nies,dh}. A class of randomness is defined as the intersection of sets that are weakly forced by strategies with some computability condition. For instance, computable strategies specify the class of randomness called \textit{computable random}. Although there exists no universal computable strategy, i.e., the computable randomness cannot be forced by a single strategy, to investigate such class in the context of thermodynamics may be interesting. We hope that the present paper provides an insight into studying thermodynamics from the viewpoints of the martingale structure and computability of strategies. As a study in the same spirit, see \cite{hs} showing the relevance of algorithmic randomness to thermodynamic irreversibility.

\begin{acknowledgements}
The author thanks Shin-ichi Sasa for carefully reading the manuscript and making valuable comments. The present work was supported by JSPS KAKENHI Grant Number JP20J12143.
\end{acknowledgements}

\appendix

\section{Proof of Eq. (\ref{eq:pv2l})}
\label{ap:pv2l}

We prove a stronger statement $P \{ \sup_n W_n^{\hat{\mu}} = \infty \} = 0$. First, we note that $P \{ \sup_n W_n^{\hat{\mu}} = \infty \}  = P \{ \sup_{n} e^{\beta W_n^{\hat{\mu}}} = \infty \}$. $e^{\beta W_n^{\hat{\mu}}}$ is a positive martingale and bounded in $L^1$ because $\mathbb{E}_P (e^{\beta W_n^{\hat{\mu}}}) = 1$ for any $n$. We have from the martingale convergence theorem that $e^{\beta W_{\infty}^{\hat{\mu}}} \coloneqq \lim_{n \to \infty} e^{\beta W_n^{\hat{\mu}}}$ exists and is a finite non-negative value almost surely. Hence, $P \{ \sup_n W_n^{\hat{\mu}} = \infty \} = 0$.

\section{Exponentiated Extracted Work = Martingale}
\label{ap:mg}

For a given strategy $\hat{H}: \Omega^* \to \mathbb{R}$, the exponential of the accumulation of extracted work $e^{\beta W^{\hat{H}}}$ is a positive martingale with respect to $g_{\beta H}^{\mathbb{N}_{+}}$. In fact, 
\begin{align*}
 \mathbb{E}_{g_{\beta H}} [ e^{\beta W_n^{\hat{H}}} | \omega^{n-1} ] &= \sum_{\omega_n \in \Omega} \frac{\hat{q}(\omega^{n})}{g_{\beta H}^{n} (\omega^n)} g_{\beta H}(\omega_n) = \sum_{\omega_n \in \Omega} \frac{\hat{q}(\omega^{n-1} \omega_n)}{g_{\beta H}^{n-1} (\omega^{n-1})} = \frac{\hat{q} (\omega^{n-1})}{g_{\beta H}^{n-1}(\omega^{n-1})} = e^{\beta W^{\hat{H}}(\omega^{n-1})},
\end{align*}
where we have used $\sum_{\omega_n \in \Omega} \hat{q}(\omega^{n-1} \omega_n) = \hat{q}(\omega^{n-1})$ in the last line.

Conversely, suppose that a process $M : \Omega^* \to \mathbb{R}_{+}$ is a positive martingale with respect to $g_{\beta H}^{\mathbb{N}_{+}}$ and $M(\square) = 1$. We then have that $\hat{q} (\omega_n | \omega^{n-1}) \coloneqq g_{\beta H}(\omega_n) M(\omega^{n-1} \omega_n) / M(\omega^{n-1})$ is a positive probability density on $\Omega$. In fact, the positivity of $M$ leads to $\hat{q}(\omega_n | \omega^{n-1}) > 0$ and the martingale property of $M$ implies that
\begin{align*}
 \sum_{\omega_n \in \Omega} \hat{q}(\omega_n | \omega^{n-1}) &= \frac{1}{M(\omega^{n-1})} \sum_{\omega_n \in \Omega} g_{\beta H}(\omega_n) M(\omega^{n-1} \omega_n) = \frac{M(\omega^{n-1})}{M(\omega^{n-1})} = 1.
\end{align*}
By defining the strategy $\hat{H}$ through the relation $g_{\beta \hat{H}(\cdot | \omega^{n-1})} (\omega_n) = \hat{q}(\omega_n | \omega^{n-1})$, we have that $e^{\beta W^{\hat{H}}(\omega^n)} = M(\omega^n)$ for any $\omega^n \in \Omega^*$ including the case $\omega^n = \square$.

\section{Proof of Theorem \ref{thm:p}}
\label{ap:proof}

We present a complete proof of Theorem \ref{thm:p}. 

First, we prove Lemma \ref{lem:wff}. It is proved in the same way as Lemma 3.1 in \cite{sv1}.

\begin{proof}[Proof of Lemma \ref{lem:wff}]
For a given strategy $\hat{\theta} : \Omega^* \to \Theta_0$ and a positive real number $C>0$, we define the ``stopped strategy'' $\hat{\theta}^{(C)}$ as
\begin{align}
 \hat{\theta}^{(C)} (\omega^n) =
 \begin{cases}
  \hat{\theta}(\omega^n) & \text{if $\beta W^{\hat{\theta}}(\omega^m) < C$ for all $m \leq n$} \\
  \theta & \text{otherwise}.
 \end{cases}
\end{align}
Let $\hat{\theta}$ be a strategy that weakly forces $E$ and $\hat{\theta}^{(C)}$ its stopped strategy for $C>0$. Consider the countable number of stopped strategies $\hat{\theta}^{(2^i)}$ for $i=1,2,\dots$. Since $\hat{\theta}^{(2^i)} (\omega^n) \in \Theta_0$ and $\Theta_0$ is compact, $\hat{\theta}^* (\omega^n) \coloneqq \sum_{i=1}^{\infty} 2^{-i} \hat{\theta}^{(2^i)} (\omega^n)$ exists for any $\omega^n \in \Omega^*$. The closedness and convexity of $\Theta_0$ lead to $\hat{\theta}^*(\omega^n) \in \Theta_0$. Hence, the function $\hat{\theta}^* : \Omega^* \to \Theta_0$ defines a strategy taking values in $\Theta_0$. We obtain from the convexity of $\psi$ and Jensen's inequality that
\begin{align}
 W^{\hat{\theta}^*}(\omega^n) \geq \sum_{i=1}^{\infty} 2^{-i} W^{\hat{\theta}^{(2^i)}}(\omega^n).
\end{align}
For $\omega^{\infty} \in E^c$, the limit $\lim_n \beta W^{\hat{\theta}^{(C)}} (\omega^n)$ exists and is larger than or equal to $C$ because $\sup_n W^{\hat{\theta}}(\omega^n) = \infty$. Since $\lim_n \beta W^{\hat{\theta}^{(2^i)}}(\omega^n) \geq 2^i$ for $\omega^{\infty} \in E^c$, $\lim_{n} W^{\hat{\theta}^*}(\omega^n)$ diverges to infinity for $\omega^{\infty} \in E^c$. 
\end{proof}

Next, we summarize the basic properties of the maximum likelihood estimator. The convexity of $- \ln p^n_{\theta} (\omega^n)$ with respect to $\theta$ implies that
\begin{align} \label{eq:pd}
 (\theta - \hat{\theta}_{\mathrm{ML}}(\omega^n)) \cdot  (\hat{\mu}_{\mathrm{ML}}(\omega^n) - \overline{\phi}_n ) \geq 0
\end{align}
for all $\theta \in \Theta_0$. In fact, assume that it does not hold for some $\theta \in \Theta_0$. Consider the continuous path $s(t) \coloneqq t \theta + (1-t) \hat{\theta}_{\mathrm{ML}}(\omega^n)$ for $t \in [0,1]$. The convexity of $\Theta_0$ implies that $s(t) \in \Theta_0$ for all $t \in [0,1]$. From the assumption, we obtain that $\partial_t [ - \ln p^n_{s(t)} (\omega^n)] |_{t=0} = n ( \hat{\mu}_{\mathrm{ML}}(\omega^n) - \overline{\phi}(\omega^n)) \cdot (\theta - \hat{\theta}_{\mathrm{ML}}(\omega^n)) < 0$, and thus $- \ln p^n_{s(t)} (\omega^n) < - \ln p^n_{\hat{\theta}_{\mathrm{ML}}(\omega^n)} (\omega^n)$ for a sufficiently small $t \in [0, 1]$. This contradicts the minimality of $\hat{\theta}_{\mathrm{ML}}(\omega^n)$ on $\Theta_0$. We have from the condition (\ref{eq:pd}) that if $\hat{\theta}_{\mathrm{ML}}(\omega^n) \in \mathrm{int} \Theta_0$, then $\overline{\phi}_n \in \Xi_0$ and therefore $\hat{\mu}_{\mathrm{ML}}(\omega^n) = \overline{\phi}_n$. Otherwise, $(\hat{\mu}_{\mathrm{ML}}(\omega^n) - \overline{\phi}_n) \cdot (\theta - \hat{\theta}_{\mathrm{ML}}(\omega^n)) < 0$ for some $\theta \in \Theta_0$ in a neighborhood of $\hat{\theta}_{\mathrm{ML}}(\omega^n)$, which contradicts (\ref{eq:pd}). 

The following lower bound proved by Kot\l owski and Gr\"unwald \cite{kg} is crucial for our proof.

\begin{lem}[\cite{kg}] \label{lem:kg}
\begin{align}
&\sum_{i=1}^{n} \left[ \ln p_{\hat{\theta}_{\mathrm{ML}}(\omega^{i-1})}(\omega_i) - \ln p_{\hat{\theta}_{\mathrm{ML}}(\omega^n)} (\omega_i) \right] \geq - \frac{I_{\Xi_0} (B + C_{\Xi_0})^2}{2} \ln n + O(1),
\end{align}
where $B \coloneqq \max_{\omega \in \Omega} | \phi (\omega) |$, $C_{\Xi_0} \coloneqq \max_{\mu \in \Xi_0} \| \mu \|$ and $I_{\Xi_0} \coloneqq \max_{\mu \in \Xi_0} \| I(\mu) \|$. 
\end{lem}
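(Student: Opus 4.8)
I would prove Lemma~\ref{lem:kg} by recognising its left-hand side as (minus) the regret of the sequential maximum-likelihood predictor for the exponential family under logarithmic loss, and following the argument of Kot\l owski and Gr\"unwald \cite{kg}. Abbreviate $\hat{\theta}_i \coloneqq \hat{\theta}_{\mathrm{ML}}(\omega^i)$, $\hat{\mu}_i \coloneqq \mu(\hat{\theta}_i) \in \Xi_0$, $L_i(\theta) \coloneqq - \ln p_{\theta}(\omega_i) = \psi(\theta) - \theta \cdot \phi(\omega_i) - h(\omega_i)$, and
\begin{align*}
 \Phi_i(\theta) \coloneqq \sum_{j=1}^i L_j(\theta) = i\psi(\theta) - \theta \cdot \sum_{j=1}^i \phi(\omega_j) - \sum_{j=1}^i h(\omega_j),
\end{align*}
so that $\hat{\theta}_i$ is the minimiser of $\Phi_i$ over the compact convex set $\Theta_0$, with $\Phi_0 \equiv 0$ and $\hat{\theta}_0$ fixed arbitrarily in $\Theta_0$. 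Since $\sum_{i=1}^n L_i(\hat{\theta}_n) = \Phi_n(\hat{\theta}_n) = \min_{\theta \in \Theta_0}\Phi_n(\theta)$, the left-hand side of the lemma equals $-R_n$ with $R_n \coloneqq \sum_{i=1}^n L_i(\hat{\theta}_{i-1}) - \min_{\theta \in \Theta_0}\Phi_n(\theta)$ the regret of the ``follow-the-leader'' predictor, and it suffices to show $R_n \leq \tfrac12 I_{\Xi_0}(B+C_{\Xi_0})^2 \ln n + O(1)$.

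The plan has three steps. First, I would establish the exact telescoping identity
\begin{align*}
 R_n = \sum_{i=1}^n \bigl[\Phi_i(\hat{\theta}_{i-1}) - \Phi_i(\hat{\theta}_i)\bigr],
\end{align*}
which follows by inserting $\Phi_i = \Phi_{i-1} + L_i$ into each summand and noting that the terms $\Phi_{i-1}(\hat{\theta}_{i-1})$ and $\Phi_i(\hat{\theta}_i)$ telescope against $\Phi_0 \equiv 0$. Second, I would bound the $i$-th summand for an arbitrary individual sequence $\omega^n$: a second-order Taylor expansion of the $C^2$ convex function $\Phi_i$ about $\hat{\theta}_{i-1}$, with Hessian evaluated at an intermediate point $\tilde{\theta}$ on the segment $[\hat{\theta}_i,\hat{\theta}_{i-1}] \subseteq \Theta_0$, gives
\begin{align*}
 \Phi_i(\hat{\theta}_{i-1}) - \Phi_i(\hat{\theta}_i) = \langle \nabla\Phi_i(\hat{\theta}_{i-1}),\, \hat{\theta}_{i-1} - \hat{\theta}_i \rangle - \tfrac12 \langle \hat{\theta}_{i-1} - \hat{\theta}_i,\, \nabla^2\Phi_i(\tilde{\theta})(\hat{\theta}_{i-1} - \hat{\theta}_i) \rangle .
\end{align*}
Here $\nabla\Phi_i(\hat{\theta}_{i-1}) = \nabla\Phi_{i-1}(\hat{\theta}_{i-1}) + \nabla L_i(\hat{\theta}_{i-1})$, and the first-order optimality of $\hat{\theta}_{i-1}$ for $\Phi_{i-1}$ over $\Theta_0$ gives $\langle \nabla\Phi_{i-1}(\hat{\theta}_{i-1}),\, \hat{\theta}_{i-1} - \hat{\theta}_i \rangle \leq 0$; since $\nabla L_i(\hat{\theta}_{i-1}) = \hat{\mu}_{i-1} - \phi(\omega_i)$, the linear term is at most $\langle \hat{\mu}_{i-1} - \phi(\omega_i),\, \hat{\theta}_{i-1} - \hat{\theta}_i \rangle$. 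Moreover $\nabla^2\Phi_i(\tilde{\theta}) = i\,\mathrm{Cov}_{\tilde{\theta}}(\phi) \succeq (i/I_{\Xi_0})\,\mathrm{Id}$, because $\|\mathrm{Cov}_{\tilde{\theta}}^{-1}(\phi)\| = \|I(\mu(\tilde{\theta}))\| \leq I_{\Xi_0}$ for $\tilde{\theta} \in \Theta_0$. Maximising the resulting concave quadratic in the vector $\hat{\theta}_{i-1} - \hat{\theta}_i$ yields
\begin{align*}
 \Phi_i(\hat{\theta}_{i-1}) - \Phi_i(\hat{\theta}_i) \leq \frac{I_{\Xi_0}}{2i}\,\|\hat{\mu}_{i-1} - \phi(\omega_i)\|^2 \leq \frac{I_{\Xi_0}(C_{\Xi_0}+B)^2}{2i},
\end{align*}
using $\|\hat{\mu}_{i-1}\| \leq C_{\Xi_0}$ and $\|\phi(\omega_i)\| \leq B$. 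Third, I would sum over $i = 1,\dots,n$ and use $\sum_{i=1}^n 1/i = \ln n + O(1)$ to conclude $R_n \leq \tfrac12 I_{\Xi_0}(B+C_{\Xi_0})^2 \ln n + O(1)$, which is the claim.

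A convenient feature of this route is that the per-round bound is an exact inequality valid for every $\omega^n$, so the awkward rounds in which $\hat{\theta}_{i-1}$ sits on $\partial\Theta_0$ (equivalently $\overline{\phi}_{i-1} \notin \Xi_0$, cf.\ the discussion after (\ref{eq:pd})) require no separate treatment — they are absorbed by the optimality inequality. The step I expect to need the most care is the two-sided control of the Hessian $\nabla^2\Phi_i = i\,\mathrm{Cov}(\phi)$ on $\Theta_0$: one must check that $\mathrm{Cov}_{\theta}(\phi)$ stays uniformly positive definite with $1/I_{\Xi_0}$ the correct modulus over all of $\Theta_0$, which is where compactness of $\Theta_0$ and minimality (hence nondegeneracy of the Fisher information) of the exponential family enter. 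Everything else is bookkeeping; alternatively one may simply invoke \cite{kg} for this estimate.
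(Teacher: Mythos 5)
Your argument is correct, but be aware that the paper does not actually prove Lemma~\ref{lem:kg}: it is imported as a black box from Kot\l owski and Gr\"unwald \cite{kg}, and Appendix~C only uses the stated inequality. So you are comparing against a citation, not a proof, and what you have supplied is a self-contained reconstruction of the cited bound. It holds up: identifying the left-hand side with minus the follow-the-leader regret, the telescoping identity $R_n=\sum_{i=1}^n[\Phi_i(\hat{\theta}_{i-1})-\Phi_i(\hat{\theta}_i)]$, the variational inequality $\langle\nabla\Phi_{i-1}(\hat{\theta}_{i-1}),\hat{\theta}_i-\hat{\theta}_{i-1}\rangle\geq 0$ at the constrained minimizer (which, as you say, absorbs the rounds with $\overline{\phi}_{i-1}\notin\Xi_0$ without case analysis), the Hessian bound $\nabla^2\Phi_i(\tilde{\theta})=i\,\mathrm{Cov}_{\tilde{\theta}}(\phi)\succeq (i/I_{\Xi_0})\,\mathrm{Id}$ on the segment $[\hat{\theta}_i,\hat{\theta}_{i-1}]\subseteq\Theta_0$, and the per-round bound $I_{\Xi_0}(B+C_{\Xi_0})^2/(2i)$ summed via $\sum_{i\leq n}1/i=\ln n+O(1)$ all check out and reproduce exactly the constant in the statement, uniformly over individual sequences $\omega^n$, which is precisely the deterministic form needed for (\ref{eq:bound}). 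Two small points to pin down in a written version: fix a convention for $\hat{\theta}_{\mathrm{ML}}(\square)$ (any point of $\Theta_0$ works, since $\nabla\Phi_0\equiv 0$ makes the $i=1$ optimality condition vacuous), and state that $\|\cdot\|$ in $I_{\Xi_0}=\max_{\mu\in\Xi_0}\|I(\mu)\|$ is the spectral norm, so that $\|\mathrm{Cov}_{\tilde{\theta}}^{-1}(\phi)\|\leq I_{\Xi_0}$ really does give $\lambda_{\min}(\mathrm{Cov}_{\tilde{\theta}}(\phi))\geq 1/I_{\Xi_0}$; both maxima are finite by compactness of $\Xi_0$, continuity of $\mu\mapsto I(\mu)$, and positive definiteness of the Fisher information under minimality, exactly as you note.
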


By using Lemma \ref{lem:kg}, we obtain Theorem \ref{thm:p} immediately.

\begin{proof}[Proof of Theorem \ref{thm:p}]
For any strategy $\hat{\theta}$, $\beta W_n^{\hat{\theta}}$ can be decomposed into two parts:
\begin{align}
 \beta W^{\hat{\theta}} (\omega^n) &= \sum_{i = 1}^{n} \left[ \ln p_{\hat{\theta}(\omega^n)}(\omega_i) - \ln p_{\theta} (\omega_i) \right] + \sum_{i=1}^{n} \left[ \ln p_{\hat{\theta}(\omega^{i-1})}(\omega_i) - \ln p_{\hat{\theta}(\omega^n)} (\omega_i) \right].
\end{align}
First, we investigate the first part for $\hat{\theta}_{\mathrm{ML}}$. We have from (\ref{eq:kl}) and (\ref{eq:pd}) that
\begin{align} \label{eq:ft}
 \sum_{i = 1}^{n} \left[ \ln p_{\hat{\theta}_{\mathrm{ML}}(\omega^n)}(\omega_i) - \ln p_{\theta} (\omega_i) \right] &= n D(P_{\hat{\theta}_{\mathrm{ML}}(\omega^n)} \| P_{\theta}) + n (\hat{\theta}_{\mathrm{ML}}(\omega^n) - \theta) \cdot  ( \overline{\phi}_n - \hat{\mu}_{\mathrm{ML}}(\omega^n))
 \notag \\
 & \geq n D(P_{\hat{\theta}_{\mathrm{ML}}(\omega^n)} \| P_{\theta}).
\end{align}
Combining Lemma \ref{lem:kg} with the above inequality (\ref{eq:ft}), we have that
\begin{align} \label{eq:bound1}
 \beta W^{\hat{\theta}_{\mathrm{ML}}} (\omega^n) &\geq n \left[ D(P_{\hat{\theta}_{\mathrm{ML}}(\omega^n)} \| P_{\theta}) - \frac{I_{\Xi_0} (B + C_{\Xi_0})^2}{2} \frac{\ln n}{n} \right] + O(1).
\end{align}
Suppose that $\sup_n W^{\hat{\theta}_{\mathrm{ML}}}_n (\xi) < \infty$ for $\xi = \omega_1 \omega_2 \dots$. Then, there exists a real number $C_{\xi} \in \mathbb{R}$ such that $W^{\hat{\theta}_{\mathrm{ML}}} (\omega^n) \leq C_{\xi}$ for all $n \in \mathbb{N}_{+}$. From the inequality (\ref{eq:bound1}), we have that
\begin{align}
 D(P_{\hat{\theta}_{\mathrm{ML}}(\omega^n)} \| P_{\theta}) \leq \frac{C_{\xi}}{n} + \frac{I_{\Xi_0} (B + C_{\Xi_0})^2}{2} \frac{\ln n}{n}  + O(n^{-1}).
\end{align}
Since $C_{\xi}$ is independent of $n$, we obtain that $D(P_{\hat{\theta}_{\mathrm{ML}}(\omega^n)} \| P_{\theta}) \to 0$ as $n \to \infty$. This implies that $p_{\hat{\theta}_{\mathrm{ML}}(\omega^n)}(\omega)$ converges to $p_{\theta}(\omega)$ for any $\omega \in \Omega$, and therefore $\hat{\mu}_{\mathrm{ML}} (\omega^n) \to \mu (\theta)$. For $\xi \in \Omega^{\mathbb{N}_{+}}$ such that $\hat{\mu}_{\mathrm{ML}} \to \mu (\theta) \in \mathrm{int} \Xi_0$, $\overline{\phi}_{n} \in \Xi_0$ and $\hat{\mu}_{\mathrm{ML}}(\omega^n) = \overline{\phi}_n$ for sufficiently large $n$, and therefore $\overline{\phi}_n \to \mu (\theta)$.
\end{proof}

\section{Details of Numerical Example}
Consider the Ising Hamiltonian under the homogeneous magnetic field for two spins, $\beta H(\sigma_1, \sigma_2) = - \beta J \sigma_1 \sigma_2 - \beta h_{\mathrm{ex}} (\sigma_1 + \sigma_2)$. We consider the situation that the agent changes the magnetic field with fixed coupling constant. We set $\beta J = 1$ for simplicity and write $\theta \coloneqq \beta h_{\mathrm{ex}}$. The initial parameter is set to be $\theta = 0$. The average magnetization $\mu (\theta) = \mathbb{E}_{\theta} [ \sigma_1 + \sigma_2 ]$ is given by $\mu (\theta) = 2(e^{2 \theta} + e^{- 2 \theta})/(e^{2\theta} + e^{-2 \theta} + 2 e^{-2})$. The maximum likelihood estimator for $\theta$ with respect to a compact set $\Theta_0 = [ - \ln 2, \ln 2]$ is given by
\begin{align}
 \hat{\theta}_{\mathrm{ML}} (\omega^n) = 
 \begin{cases}
 - \ln 2 & \text{if } \bar{\phi}_n < \mu ( - \ln 2)
 \\
 \theta ( \bar{\phi}_n ) & \text{if } \bar{\phi}_n \in [ \mu ( - \ln 2), \mu ( \ln 2) ]
  \\
 \ln 2 & \text{if } \bar{\phi}_n > \mu ( \ln 2),
 \end{cases}
\end{align}
where
\begin{align}
 \theta (\mu ) = \frac{1}{2} \ln \left( \frac{\mu e^{-2} + \sqrt{4 + (e^{-4} -1) \mu^2}}{2-\mu} \right)
\end{align}
is the inverse of $\theta \mapsto \mu (\theta)$. If a sequence is generated by the Gibbs distribution with $\theta \neq 0$, the empirical mean for the sequence converges to $\mu (\theta)$ but it is different from $\mu (0)$. Fig. \ref{fig:work} shows that the accumulation of extracted work under the maximum likelihood strategy diverges to infinity for such sequences.

\end{document}